\newcommand{\K}{\ensuremath{\xi}}
\newcommand{\rh}{\ensuremath{\rho}}
\newtheorem{theorem}{Theorem}[section]  
\newtheorem{claim}[theorem]{Claim}
\newtheorem{lemma}[theorem]{Lemma}
\theoremstyle{definition}
\newtheorem{definition}[theorem]{Definition}
\newcommand{\R}{\ensuremath{\mathbb{R}}}
\newcommand{\Q}{\ensuremath{\mathbb{Q}}}
\newcommand{\Z}{\ensuremath{\mathbb{Z}}}
\newcommand{\nn}{\ge0}
\newcommand{\po}{>0}
\newcommand{\balval}{14n^2}
\newcommand{\desc}{\operatorname{desc}}
\newcommand{\lca}{\operatorname{lca}}
\newcommand{\parent}{{p}}
\newcommand{\children}{\operatorname{children}}
\newcommand{\LB}{\ensuremath{L}} 
\newcommand{\UB}{\ensuremath{U}} 
\newcommand{\CV}{\operatorname{CV}}
\newcommand{\GIA}{\operatorname{HIA}}
\newcommand{\pred}{\operatorname{pred}}
\newcommand{\Upper}{\operatorname{Upper}}
\newcommand{\Lower}{\operatorname{Lower}}
\newcommand{\bal}{\beta}
\newcommand{\CI}{\operatorname{CI}}
\newcommand{\U}{\operatorname{\Gamma}}
\newcommand{\bck}{\operatorname{Bucket}}
\newif\ifnotes\notesfalse
\newcommand{\notename}[2]{{\textcolor{red}{\footnotesize{\bf (#1:} {#2}{\bf ) }}}}
\newcommand{\jnote}[1]{{\notename{Jim}{#1}}}
\newcommand{\lnote}[1]{{\notename{Laci}{#1}}}
\renewcommand{\b}[1]{{\color{blue} #1}}
\newcommand{\notename}[2]{{}}
\newcommand{\jnote}[1]{}
\newcommand{\lnote}[1]{}
\renewcommand{\b}[1]{{ #1}}
\title{Directed Shortest Paths via Approximate Cost Balancing}
\author{James B. Orlin\thanks{Supported by the ONR Grant N00014--17--1--2194.}\\
Sloan School of Management\\
Massachusetts Institute of Technology\\
 \texttt{jorlin@mit.edu}
\and
L\' aszl\' o A. V\' egh\thanks{Supported by the European Research Council (ERC) under the European Union's Horizon 2020 research and innovation programme (grant agreement ScaleOpt--757481).}\\ 
Department of Mathematics\\
London School of
    Economics and Political Science   \\
\texttt{l.vegh@lse.ac.uk}
}
\date{}
\begin{document}

\maketitle

\begin{abstract}
We present an $O(nm)$ algorithm for all-pairs shortest paths computations in a directed graph with $n$ nodes, $m$ arcs, and nonnegative integer arc costs. This matches the complexity bound attained by Thorup \cite{Thorup1999} for the all-pairs problems in undirected graphs. The main insight is that shortest paths problems with approximately balanced directed cost functions can be solved similarly to the undirected case. The algorithm finds an approximately balanced reduced cost function in an $O(m\sqrt{n}\log n)$ preprocessing step. Using these reduced costs, every shortest path query can be solved in $O(m)$ time using an adaptation of Thorup's component hierarchy method.
The balancing result can also be applied to the $\ell_\infty$-matrix balancing problem.
\end{abstract}


\section{Introduction}

Let $G=(N,A,c)$ be a directed graph with nonnegative arc costs, and $n=|N|$, $m=|A|$.
In this paper, we consider the \emph{single-source shortest paths (SSSP)} and the \emph{all-pairs shortest paths (APSP)} problems. In the SSSP problem, the goal is to find the shortest paths from a given source node $s\in N$ to every other node; in the APSP problem, the goal is to determine the shortest path distances between every pair of nodes.

The seminal approach for SSSP is Dijkstra's 1959 algorithm \cite{Dijkstra1959}. 
 An $O(m+n\log n)$ implementation of this algorithm using 
the Fibonacci heap data structure is due to 
 Fredman and Tarjan  \cite{Fredman1987}.   Under the assumption that all of arc lengths are integral,  Thorup \cite{Thorup2004-integer} improved the running time for SSSP to  $O(m+n\log\log n)$.  Thorup's algorithm uses the word RAM model of computation, discussed in Section \ref{sec:comp-mod}.

For the APSP problem, one can obtain $O(mn+n^2 \log\log n)$ by running the SSSP algorithm of \cite{Thorup2004-integer} $n$ times. This has been the best previously known result for directed graphs. The main contribution of this paper is an $O(mn)$ algorithm for APSP in the word RAM model.

A breakthrough result by Thorup \cite{Thorup1999} obtained a linear time SSSP algorithm in the word RAM model for undirected graphs,  implying $O(mn)$ for APSP. 
Our algorithm matches this bound for undirected graphs: it is based on an
 $O(m\sqrt{n}\log n)$ preprocessing algorithm that enables SSSP queries in $O(m)$ time.

\medskip

Thorup \cite{Thorup1999} uses a  \emph{label setting} algorithm that is similar to Dijkstra's algorithm.    Label setting algorithms maintain upper bounds $D(i)$ on the true shortest path distances $d(i)$ from the origin node $s$ to each node $i$,
and add nodes one-by-one to the set of  permanent nodes $S$.  
 At the time a node $i$ is made permanent,  $D(i)=d(i)$ holds---see \cite[Chapter 4]{AMO}.   In Dijkstra's algorithm,  $D(i) \le D(j)$ is true for all  nodes $j\notin S$ in the iteration when node  $i$ is made permanent.
Relaxing this property is a key in further improvements

Let us define the \emph{bottleneck costs} for
 nodes $i,j\in N$ as
\begin{equation}\label{eq:bottleneck-def}
b(i, j):=\min\left\{\max_{e\in P} c(e): P\textrm{ is an $i$--$j$ path in }G\right\}\, .
\end{equation}

Dinitz \cite{Dinic1978} showed that label setting algorithms are guaranteed to find the shortest path distances if the following is true: whenever a node $i$ is made permanent, $D(i) \le D(j) + b(j, i)$ for all $j\notin S$.   If an algorithm satisfies this weaker condition, then at termination it obtains distances satisfying $d(j) \le d(i) +  b(i, j)$ for all $i$ and $j$, which in turn implies the shortest path optimality conditions: $d(j) \le d(i) + c(i, j)$ for all $(i, j)  \in A$---see Lemma~\ref{lem:shortest-correctness}.

Thorup's algorithm as well as the algorithm presented in this paper rely on this weaker guarantee of correctness. Both algorithms accomplish this by creating a \emph{component hierarchy}---see Definition~\ref{dec:comp-hi} for the variant used in this paper.  Thorup developed this tool for SSSP on undirected networks;   the hierarchy framework was subsequently extended to directed graphs in \cite{Hagerup2000, Pettie2002, Pettie2004}.

Our results also rely on the classical observation that shortest path computations are invariant under shifting the costs by a node potential.
For a potential $\pi: N\to \R$, the \emph{reduced cost}  is defined as $c^\pi(u,v):=c(u,v)+\pi(u)-\pi(v)$. Computing shortest paths for $c$ and any reduced cost $c^\pi$ are equivalent: if $P$ is a $u$--$v$ path, then $c^\pi(P)=c(P)+\pi(u)-\pi(v)$. 

We extend the use of reduced costs to the bottleneck costs.
\begin{equation*}
b^{\pi}(i, j):=\min\left\{\max_{e\in P} c^{\pi}(e): P\textrm{ is an $i$--$j$ path in }G\right\}\, .
\end{equation*}
Our preprocessing step  obtains a reduced cost function satisfying the following $\K$-min-balancedness property for a constant $\K>1$.

\begin{definition}\label{def:balanced}
A strongly connected directed graph $G=(N,A,c)$ with nonnegative arc costs $c\in \R_{\nn}^A$  is \emph{$\K$-min-balanced} for some $\K\ge 1$ if for every arc $e\in A$, there exists a directed cycle $C\subseteq A$ with $e\in C$, such that $c(f)\le \K c(e)$ for all $f\in C$.
\end{definition}

The importance of $\K$-min-balancedness in the context of hierarchy-based algorithms arises from the near-symmetry of the bottleneck values $b(i,j)$.
Lemma~\ref{lem:bottleneck} below shows a graph is $\K$-min-balanced if and only if $b(j,i)\le \K b(i,j)$ for all $i,j\in N$. 
Thorup's component hierarchy for undirected graphs implicitly relies on the fact that $b(i, j) = b(j, i)$ for all nodes $i$ and $j$. For a $\K$-balanced reduced cost function $c^\pi$, the values $b^\pi(i,j)$ and $b^\pi(j,i)$ are within a  factor $\K$.  We can leverage this proximity to use component hierarchies essentially the same way as for undirected graphs in Thorup's original work \cite{Thorup1999}, and achieve the same $O(m)$ complexity for an SSSP query, after an initial $O(m\sqrt{n}\log n)$ balancing algorithm.

This balancedness notion is closely related to the extensive literature on matrix balancing and gives an improvement for approximate $\ell_\infty$-balancing. We give an overview of the related literature in Section~\ref{sec:balance-overview}.

\bigskip

\subsection{Related work}
\subsubsection{The SSSP and  APSP problems}\label{sec:shortest-overview}
In the context of shortest path problems, the choice of the computational model is of high importance. The main choice is between the comparison-addition model with real costs, and variants of word RAM models with integer costs (see Section~\ref{sec:comp-mod}). In the comparison-addition model, additions and comparisons each take $O(1)$ time, regardless of the quantities involved.  Other  operations are not permitted except in so much as they can be simulated using additions and comparisons.

There is an important difference between these computational models in terms of lower bounds: sorting in the comparison-addition model requires $\Omega(n\log n)$, whereas no superlinear lower bound is known for integer sorting. Since Dijkstra's algorithm makes nodes permanent in a non-decreasing order of the shortest path distance $d(i)$ from $s$, the $O(m+n\log n)$ Fibonacci-heap implementation \cite{Fredman1976} is optimal for Dijkstra's algorithm in the comparison-addition model. Moreover, this is still the best known running time for SSSP in this model.

The best running time for APSP in the comparison-addition model is $O(mn+n^2 \log \log n)$ by Pettie \cite{Pettie2004}. This matches the best previous running time bounds for the integer RAM model, where the same bound was previously attained in \cite{Hagerup2000,Thorup2004-integer}.

Pettie's \cite{Pettie2004} algorithm is based on the hierarchy framework. The same paper gives a lower bound that, at first glance, seems to imply that an $O(m)$ running time for the directed SSSP may not be achievable.  

Let $r$ be the ratio between the largest and the smallest nonzero arc cost.   Pettie argued that if a shortest path algorithm for the directed SSSP is based on the hierarchy framework, then the running time of the algorithm is $\Omega(m+\min \{n \log r, n\log n\})$, even if the hierarchy is  provided beforehand. This follows via an information-theoretic argument that is valid both in the comparison-addition as well as in the word RAM models. It uses the fact that any hierarchy approach must make node $i$ permanent before $j$ whenever $d(j)\ge d(i)+b(i,j)$. However, this interpretation of   hierarchy frameworks for directed networks does not allow for replacing the costs by equivalent reduced costs, even though such a transformation may considerably change the bottleneck values $b(i,j)$.  Therefore, his arguments do not contradict our development of an $O(m)$ time algorithm for the directed SSSP.

\medskip

For undirected graphs,  Pettie and Ramachandran \cite{PR2005} solve APSP in $O(mn\log \alpha(m,n))$ in the comparison-addition model, where $\alpha(m,n)$ is the inverse Ackermann function. After an $O(m+\min \{n \log n, n\log \log r\})$ time preprocessing step, every SSSP problem can be solved in time $O(m\log \alpha(m,n))$.

\medskip

For dense graphs, that is, graphs with $m=\Omega(n^2)$ edges, the classical Floyd-Warshall algorithm \cite{Floyd1962,Warshall1962} yields $O(nm)=O(n^3)$.
The first $o(n^3)$ algorithm was given by Fredman \cite{Fredman1976}, in time $O(n^3/\log^{1/3} n)$. This was followed by a long series of improvements with better logarithmic factors, see references in \cite{Williams2014}. In 2014, Williams \cite{Williams2014} achieved a breakthrough with a randomized algorithm  running in time
$n^3/2^{\Omega(\sqrt{\log n})}$, by speeding up min-plus (tropical) matrix multiplication using tools from circuit complexity. A deterministic algorithm of the same asymptotic running time was obtained by Chan and Williams \cite{Chan2021}.

\subsubsection{Approximate graph and matrix balancing}\label{sec:balance-overview}
Our notion of $\K$-min-balanced graphs is closely related to previous work on graph and matrix balancing. For $\K=1$, we 
 simply say that $G$ is min-balanced. 
A graph $G$ is min-balanced if and only if for each proper subset $S$ of nodes, the following is true:   the minimum cost over arcs entering $S$ is at equal to the minimum cost over arcs leaving $S$---see Lemma \ref{lem:bottleneck}.

Schneider and Schneider \cite{Schneider1991} defined max-balanced graphs where for every subset $S$, the maximum cost over arcs entering  $S$ equals the maximum cost over arcs leaving $S$. For each $e \in E$,  let $c'(e)= c_{\max} - c(e)$.   Then $G$ is max-balanced with respect to $c' $ if and only if $G$ is min-balanced with respect to $c$.
For exact min/max-balancing, the running time $O(mn+n^2\log n)$ by Young, Tarjan, and Orlin \cite{Young1991} is still the best known complexity bound. Relaxing the exactness condition, we give an 
 $O\left( 2^\rho (\rho+1)\sqrt{n} \log n\right)$ algorithm  for $\K$-min-balancing for any  $\rho\in \Z_{\ge0}$, $\K=1+1/2^{\rho-1}$.

\medskip

Min-balancing is a generalization of the min-mean cycle problem: if $C$ is a min-mean cycle, then any min-balanced residual cost function satisfies $c^\pi(e)\ge \mu$ for all $e\in E$ and $c^\pi(e)=\mu$ for $e\in C$ for some $\mu\in \R$. In fact, following \cite{Schneider1991}, one can solve min-balancing as a sequence of min-mean cycle computations; see the discussion after Theorem~\ref{thm:balance-main}. 
Karp's $O(mn)$ algorithm from 1978 \cite{Karp1978} is still the best known strongly polynomial algorithm for min-mean cycle problem.
Weakly polynomial algorithms that run in $O(m\sqrt{n}\log(nC))$ time  were given by Orlin and Ahuja \cite{Orlin1992}  and by    McCormick \cite{McCormick1993}. The latter provides a scaling algorithm based on the same subroutine of Goldberg \cite{Goldberg1995} that plays a key role in our balancing algorithm. The algorithms \cite{McCormick1993,Orlin1992} easily extend to finding an $\varepsilon$-approximate min-mean cycle in 
 $O(m\sqrt{n}\log(n/\varepsilon))$ time.   That is, finding a reduced cost $c^\pi$, a cycle $C$, and a value $\mu$ such that $c^\pi(e)\ge \mu$ for all $e\in E$ and $c^\pi(e)\le (1+\varepsilon)\mu$ for all $e\in C$.

A restricted case of APSP is the problem of finding the shortest cycle in a network.   Orlin and Sede\~no-Noda \cite{orlin2017nm} show how to solve the shortest cycle problem in $O(nm)$ time by solving a sequence of $n$ (truncated) shortest path problems, each in $O(m)$ time.   Their preprocessing algorithm was the solution of a minimum cycle mean problem in $O(nm)$ time.   However---analogously to the approach in this paper---they could have relied instead on \cite{McCormick1993,Orlin1992} to find a 2-approximation of the minimum cycle mean in $O(m\sqrt{n}\log n)$ time.

\medskip

We say that  a graph is \emph{weakly max-balanced} if for every node $v\in N$, the maximum cost over arcs entering $v$ equals the maximum cost over arcs leaving $v$; that is, we require the property in the definition of max-balancing only for singleton sets $S=\{v\}$. 

This notion corresponds to the well-studied \emph{matrix balancing} problem: given a nonnegative matrix $M\in \R^{n\times n}$, and a parameter $p\ge 1$, find a positive diagonal matrix $D$ such that in $DMD^{-1}$, the $p$-norm of the $i$-th column equals the $p$-norm of the $i$-th row. Given $G=(N,A,c)$, we let $M_{ij}=e^{c_{ij}}$ if $(i,j)\in A$ and $M_{ij}=0$ otherwise. Then,  balancing $M$ in $\infty$-norm amounts to finding a weakly max-balanced reduced cost $c^\pi$.

Matrix balancing was introduced by Osborne \cite{Osborne1960} as a preconditioning step for eigenvalue computations. He also proposed a natural iterative algorithm for $\ell_2$-norm balancing. Parlett and Reinsch \cite{Parlett1969} extended this algorithm to other norms. Schulman and Sinclair \cite{Schulman2017} showed that a natural variant of the Osborne--Parlett--Reinsch (OPR) algorithm finds an $\varepsilon$-approximately balanced solution in $\ell_\infty$ norm in time $O(n^3\log (n\rho/\varepsilon))$, where $\rho$ is the initial imbalance. Ostrovsky, Rabani, and Yousefi \cite{Ostrovsky2017} give polynomial bounds for variants of the  OPR algorithm for fixed finite $p$ values, in particular, 
$O(m+n^2\varepsilon^{-2}\log w)$ for a weighted randomized variant, where $w$ is the ratio of the sum of the entries over the minimum nonzero entry, and $m$ is the number of nonzero entries.
Recently, Altschuler and Parillo \cite{Altschuler2022} showed an $\tilde O(m\varepsilon^{-2}\log w )$ bound for a simpler randomized variant of OPR.
Cohen~et al.~\cite{Cohen2017} use second order optimization techniques to attain $\tilde O(m\log \kappa \log^2 (nw/\varepsilon))$, where $\kappa$ is the ratio between the maximum and minimum entries of the optimal rescaling matrix $D$; similar running times follow from \cite{Allen2017}.
The value $\kappa$ may be exponentially large; the paper \cite{Cohen2017} also shows a $\tilde O(m^{1.5} \log (nw/\varepsilon))$ bound via interior point methods using fast Laplacian solvers.\footnote{In the quoted running times, $\tilde O(.)$ hides polylogarithmic factors. Various papers define $\varepsilon$-accuracy in different ways; here, we adapt the statements to $\ell_1$-accuracy as in \cite{Altschuler2022}.}

\medskip

Our graph balancing problem corresponds to $\ell_\infty$ matrix balancing.
 Except for \cite{Schulman2017}, the above works are applicable for finite $\ell_p$ norms \b{only}. Compared to \cite{Schulman2017}, our approximate balancing algorithm has lower polynomial terms, but our running time depends linearly on $1/\varepsilon$ instead of a logarithmic dependence.\footnote{%
We note that, in contrast to the previous work, we consider min- rather than max-balancing. The exact min- and max-balancing problems can be transformed to each other by setting $c'(e)= c_{\max} - c(e)$; however, such a reduction does not preserve multiplicative approximation factors, and hence our result cannot be directly compared with \cite{Schulman2017}. Nevertheless, it seems that both algorithms can be adaptable to both the min and max settings. Such extensions are not included in this paper.}

\subsection{Overview}
The rest of the paper is structured as follows. Section~\ref{sec:prelim} introduces  notation and basic concepts, including  the directed variant of component hierarchies used in this paper, and 
the comparison-addition and word RAM computational models. Section~\ref{sec:balance} is dedicated to the approximate min-balancing algorithm.
 The algorithm is developed in several steps: a key ingredient is a subroutine by Goldberg \cite{Goldberg1995} that easily gives rise to a weakly polynomial algorithm. In order to achieve a strongly polynomial bound, we need a further preprocessing step to achieve an initial `rough balancing'. An additional technical contribution is  a new variant of the Union-Find data structure, called Union-Find-Increase. At the beginning of Section~\ref{sec:balance}, we give a detailed overview of the overall algorithm and the various subsections.

 In Section~\ref{sec:shortest-path}, we describe the shortest path algorithm for 3-min-balanced directed graphs. This is very similar to Thorup's original algorithm \cite{Thorup1999}. However, the setting is different, and we use a slightly different notion of the component hierarchy. For completeness, we include a concise description of the algorithm and the proof of correctness.
Concluding remarks are given in the final Section~\ref{sec:conclusions}.

\section{Notation and preliminaries}\label{sec:prelim}
For an integer $k$, we let $[k]=\{1,2,\ldots,k\}$. We let $\Z_{\nn}$ denote the nonnegative integers and let $\Z_{\po}$ denote the positive integers; similarly for $\Q_{\nn}$, $\Q_{\po}$, $\R_{\nn}$, and $\R_{\po}$. We let $\log x=\log_2 x$ refer to base 2 logarithm unless stated otherwise. For a vector $z\in \R^T$ and $S\subseteq T$, we let $z|_{S}\in\R^S$ denote the restriction of $z$ to $S$.

Throughout, we let $G=(N,A,c)$ be a directed graph with nonnegative  arc costs $c\in \R^A_{\nn}$. Let $m=|A|$ and let  
$n$ denote the smallest integer power of $2$ greater than or equal to $|N|$; we  assume $n,m\ge 2$. This choice instead of $n=|N|$ will be convenient in the word RAM model. 
We  use $C_{\min}=\min_{e\in E}c(e)$ and $C_{\max}=\max_{e\in E}c(e)$ to denote the smallest and largest values of the cost function.
All graphs considered will be simple and loopless.

For a node $i\in N$, we let $A(i)$ denote the set of the outgoing arcs from $i$.
For an arc set $F\subseteq A$, we let $N(F)$ denote the set of nodes incident to $F$.  For a node set $X\subseteq N$, let $A[X]$ denote the set of arcs in $A$ with both endpoints inside $X$.

For a node set $S\subseteq N$,  let $\bar S= N\setminus S$ denote the complement of $S$.  We let $(S, \bar S)\subseteq A$ denote the set of arcs directed from a node in $S$ to a node in $\bar S$.

 For a node set $Z\subseteq N$, we  denote the graph obtained by contracting $Z$ by $G/Z=(N',A',c')$. Here, $N'=(N\setminus Z)\cup\{z\}$;  $z$ represents the contracted node set. We include every arc $(i,j)\in A$ in $A'$ with the same cost if $i,j\notin Z$. Arcs with both endpoints in $Z$ are deleted. If $i\in Z$ or $j\in Z$, the corresponding endpoint is replaced by $z$. In case parallel arcs are created, we only keep one with the smallest cost.
For a partition $\mathcal{P}=(P_1,P_2,\ldots,P_k)$ of $N$, the contraction $G/\mathcal{P}$ denotes the graph obtained after contracting (in an arbitrary order) each of the sets $P_i$, $i\in k$ in $G$.

We will assume that  $G=(N,A,c)$ is \emph{strongly connected}; that is, a directed path exists between any two nodes. If the input is not strongly connected, then we preprocess the graph as follows. We find  the strongly connected components in $O(n+m)$ time using Tarjan's algorithm \cite{Tarjan1972}. We select a value $M$ greater than the sum of all arc costs, pick one node in each strongly connected component,  add a directed cycle on these nodes, and set the cost of these arcs to $M$. This results in a strongly connected graph $G'=(N,A',c')$ with $|A'|=O(m+n)$. Computing shortest paths in $G'$ provides the shortest paths in $G$; if the shortest path distance between nodes $i$ and $j$ in $G'$ is at least $M$, then $j$ is not reachable from $i$ in $G$.

\paragraph{Dijkstra's algorithm}
Dijkstra's  algorithm \cite{Dijkstra1959} is the starting point of the fastest algorithms for  SSSP and APSP. We now give a brief overview of the key steps.
The algorithm maintains distance labels $D(i)$ for each node $i$ that are upper bounds on $d(i)$, the shortest path distance from $s$.  The algorithm adds nodes one-by-one to a \emph{permanent node set} $S$ with the property that $D(i)=d(i)$ for every $i\in S$.
Further, for every $i\in N\setminus S$, $D(i)$ is the length of a shortest $s$--$i$ path in the subgraph induced by the node set $S\cup\{i\}$.

These are initialized as $D(s)=0$, $D(i)=\infty$ for $i\in N\setminus \{i\}$, and $S=\emptyset$.
  Every iteration adds a new node to $S$, selecting the node  $i\in N\setminus S$ with the smallest label $D(i)$. Then, the outgoing arcs $(i,j)$ are considered, and  $D(j)$ is updated to $\min\{D(j),D(i)+c(i,j)\}$. The crucial property of the analysis is that this selection rule is correct, that is, for $i\in\arg\min\{D(j): j\in N\setminus S\}$, we must have $D(i)=d(i)$.

\paragraph{Bottleneck costs in balanced graphs}
Our shortest path algorithm requires the input graph to be $3$-min-balanced---see Definition~\ref{def:balanced}. As shown next, the bottleneck costs are approximately balanced in such graphs.

Recall the definition of the bottleneck cost $b(i,j)$ in \eqref{eq:bottleneck-def}.   We extend the definition to non-empty disjoint subsets $S,T \subsetneq N$ as   $b(S, T) := \min \{ b(i, j) : i \in S,\, j \in T \}$.   Equivalently, 
$b(S, \bar S) = \min \{c(i, j) : i \in S, j \in \bar S \}$.
 By a \emph{bottleneck} $i$--$j$ path we mean an $i$--$j$ path where the maximum arc cost is $b(i,j)$.

\begin{lemma}\label{lem:bottleneck}
The following are equivalent.
\begin{enumerate}[(1)]
\item  $G$ is $\K$-min-balanced.
\item  For all proper subsets $\emptyset\neq S \subsetneq N$, $b(\bar S, S) \le \K b(S, \bar S)$.
\item  For all $i \in N$ and $j \in N$, $b(j, i) \le \K b(i, j)$.
\end{enumerate}

\end{lemma}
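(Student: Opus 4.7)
The plan is to prove the three conditions equivalent via the cyclic chain $(1)\Rightarrow(2)\Rightarrow(3)\Rightarrow(1)$, using strong connectivity of $G$ throughout to guarantee the existence of the paths we refer to.

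For $(1)\Rightarrow(2)$, I would fix a proper subset $\emptyset\neq S\subsetneq N$ and choose an arc $e=(u,v)\in(S,\bar S)$ attaining $c(e)=b(S,\bar S)$. Applying $\K$-min-balancedness to $e$ produces a directed cycle $C$ through $e$ with $c(f)\le \K c(e)$ for every $f\in C$. Since $C$ leaves $S$ through $e$, it must re-enter $S$ along some arc $e'\in(\bar S,S)\cap C$, which gives $b(\bar S,S)\le c(e')\le \K c(e)=\K b(S,\bar S)$.

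The main subtlety is in $(2)\Rightarrow(3)$, and the heart of the argument is choosing the right cut. Given $i,j\in N$, I would let $S$ be the set of nodes from which $i$ can be reached using only arcs of cost at most $\K b(i,j)$; note $i\in S$. Assume for contradiction that $j\notin S$; then $S$ is a proper nonempty subset, and by maximality every arc from $\bar S$ to $S$ has cost strictly greater than $\K b(i,j)$, so $b(\bar S,S)>\K b(i,j)$. Applying $(2)$ to $S$ yields $b(\bar S,S)\le \K b(S,\bar S)$, and chaining the inequalities gives $b(S,\bar S)> b(i,j)$. On the other hand, a bottleneck $i$--$j$ path has maximum arc cost equal to $b(i,j)$, and since $i\in S$, $j\in\bar S$ it must contain at least one arc of $(S,\bar S)$, forcing $b(S,\bar S)\le b(i,j)$, a contradiction. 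Hence $j\in S$, and the witnessing $j$--$i$ path gives $b(j,i)\le \K b(i,j)$.

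Finally, $(3)\Rightarrow(1)$ is direct: for an arc $e=(u,v)$, the one-arc path shows $b(u,v)\le c(e)$, so $(3)$ yields $b(v,u)\le \K b(u,v)\le \K c(e)$. Taking a simple $v$--$u$ path $P$ achieving this bottleneck and appending $e$ produces a directed cycle through $e$ all of whose arcs have cost at most $\K c(e)$ (using $\K\ge 1$ to cover $e$ itself). The step I expect to be the main obstacle is the reachability/cut construction in $(2)\Rightarrow(3)$; once the right set $S$ is chosen, everything falls out of combining the cut inequality supplied by $(2)$ with the elementary fact that any $i$--$j$ path must cross every cut separating its endpoints.
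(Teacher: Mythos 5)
Your proof is correct and follows essentially the same strategy as the paper's, including the key step of applying condition (2) to a bottleneck-reachability cut; the only cosmetic difference is that in $(2)\Rightarrow(3)$ you take $S=\{k:b(k,i)\le\K b(i,j)\}$ (nodes that can reach $i$ cheaply) while the paper takes $S=\{k:b(j,k)\le\K b(i,j)\}$ (nodes cheaply reachable from $j$), and you close with a numerical contradiction on $b(S,\bar S)$ rather than exhibiting an extra node that should belong to $S$. Both variants are valid and use identical ingredients.
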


\begin{proof}
\emph{(1) $\Rightarrow$ (2).}    Suppose that $G$ is $\K$-min-balanced and $\emptyset\neq S \subsetneq  N$.  
Choose $e \in\arg\min\{c(e) : e \in (S, \bar S)\}$;    thus, $c(e) = b(S, \bar S)$.   Let $C$ be the bottleneck cycle containing $e$.   Because $C$ contains an arc $f$ of $(\bar S, S)$, the following is true:   
$b(\bar S, S) \le c(f) \le \K c(e) =\K b(S, \bar S)$.  

\paragraph{(2) $\Rightarrow$ (3).}   
Suppose that (2) is true.  For given nodes $i$ and $j$, let  $S = \{k \in N : b(j, k) \le  \K b(i, j)\}$.   Clearly, $j \in S$.  We show by contradiction  that $i \in S$, and consequently, $b(j, i) \le \K b(i, j)$.    Suppose that $i \in\bar S$.   Let $e \in \arg\min\{c(e) : e \in (S, \bar S)\}$, and suppose that $e = (h, \ell)$.  
Then $b(j, h) \le  \K b(i, j)$ because $h \in S$; further,
$c(h, \ell) = b(S, \bar S) \le \K b(\bar S, S) \le \K b(i, j)$ by (2) and the fact that the bottleneck path from $i$ to $j$ includes an arc of $(\bar S, S)$.    Then 
$b(j, \ell) \le \max \{ b(j, h), c(h, \ell) \} \le \K b(i, j)$.  But this implies that $\ell \in S$, a contradiction.

\paragraph{(3) $\Rightarrow$ (1).}   Suppose that (3) is true.  
Let $e=(j,i)$ be any arc of $A$;   note that $b(j, i) \le c(e)$.  Let $P$ be a path from $i$ to $j$ with arcs of length at most $b(i, j)$, and let $C = P \cup \{e\}$.  Then $C$ is a cycle, and  
$\max\{c(f): f \in C\} \le \max\{b(i, j), c(e)\} \le \max\{\K b(j, i), c(e)\} \le \K c(e)$.   Thus, $G$ is $\K$-min-balanced.
\end{proof}

\paragraph{The component hierarchy} 
We now introduce the concept of a component hierarchy. This is a variant of Thorup's \cite{Thorup1999} component hierarchy,  adapted for approximately min-balanced directed graphs. The papers \cite{Hagerup2000,Pettie2002,Pettie2004} also use component hierarchies for directed graphs. However, our notion exploits the $\K$-min-balanced property, and will be more similar to the undirected concept \cite{Thorup1999} in that it does not impose orderings of the children of the vertices.

We use the standard terminology for a tree $(V',E')$ rooted at $r\in V'$. 
\begin{itemize}
	\item For $v\in V'\setminus\{r\}$, the \emph{parent} $\parent(v)$ of $v$ is the first vertex after $v$ on the unique path in the tree from $v$ to $r$. All nodes in the path are called the \emph{ancestors} of $v$.
	\item For $v\in V'$, $\children(v)\subseteq V'$ is the set of nodes $u$ such that $\parent(u)=v$.
    \item For every $v\in V'$,  $\desc(v)\subseteq V'$ is the set of nodes in the subtree rooted at  $v$. 
	\item For $u,v\in V'$, $\lca(u,v)$ is the \emph{least common ancestor} of $u$ and $v$, i.e., the unique vertex on the $u$--$v$ path in $E'$ that is an ancestor of both $u$ and $v$.
\end{itemize}
\begin{definition}\label{dec:comp-hi}
 The tuple $(V\cup N,E,r,a)$ is called a \emph{component hierarchy of $G$}  for a strongly connected directed graph $G=(N,A,c)$ if
\begin{itemize}
\item $(V\cup N,E)$ is a tree with root $r\in V$, and $N$ is the set of leaves. 
\item The vector $a:V \to \Z_{\po}$ is such that each $a(v)$ is an integer power of 2. For every $v\in V\setminus\{r\}$, $a(v)\le a(p(v))/2$.
\item  For any $i,j\in N$ with $\lca(i,j)=v$, we have $a(v)\le b(i,j)\le 3 a(v)$; moreover, there exists an $i$--$j$ path $P$ inside $\desc(v)\cap N$ such that $c(e)\le 3 a(v)$ for every arc $e\in P$.
\end{itemize}
\end{definition}

\subsection{Computational models}\label{sec:comp-mod}
Our results use two different computational models. The approximate min-balancing algorithm in Section~\ref{sec:balance} can be implemented in the 
more restrictive comparison-addition model.
However, the word RAM model is needed for constructing the component hierarchy: we require the operation of rounding down numbers to the nearest power of two in order to obtain  $a(v)$ values that are powers of two.
The shortest path algorithm in Section~\ref{sec:shortest-path} uses integer arithmetic  in two parts: {\em(1)} storing vertices and nodes in buckets, and {\em(2)} in the Split/FindMin data structure.

\paragraph{The comparison-addition model}  The input is a set of real numbers, and only addition and comparison operations are allowed; each takes constant time. Subtraction can be easily simulated with a constant overhead  by representing numbers in the form $\alpha-\beta$. Multiplication by an integer $N$ can be simulated by $O(\log N)$ additions. See \cite{Pettie2002,PR2005} for more details.

The algorithms in Section~\ref{sec:balance} also include division by a power of 2 in a restricted sense: for a value $\gamma= O(\log n)$, we require that all calculated values can be expressed as sums of quotients in the form  $w=\sum_{i=0}^\gamma w_i/2^i$, where the $w_i$ values can be obtained as a difference of two sums of input values.   We  can work with such numbers in the comparison-addition model  by representing such a sum by ordered pairs $(w_1, d_1),\ldots,(w_i,d_i)$.   

One can convert the sum of quotients into a single quotient with denominator
$\le 2^\gamma$ in time $O(\gamma)$.   Moreover, additions, subtractions, and comparisons of sums of quotients can each be carried out in $O(\gamma+1)$ steps.   
When the approximate balancing algorithms are used as preprocessing for the APSP, one can multiply the outputs by $2^\gamma$ at termination.  Then the shortest path algorithms will determine the shortest path trees for the original problem as well.

\paragraph{The word RAM model} 
We use the standard random access machine model, where every memory cell can store an integer of $w$ bits. For convenience, we assume the word size is at least
$\log (nC_{\max})$ so that  each input and output number fits into a single word. 

There is no universally accepted computational model for integer weights. We use the same model as in \cite{Hagerup2000};  this is more restrictive than the one in \cite{Thorup1999}, which also allows arbitrary multiplications. In our model,
unit-time operations include comparison, addition, subtraction, bit shifts by an arbitrary number of positions, and bitwise boolean operations.

We do not allow multiplications and divisions in general.  However, the bit shift operations  enable multiplications  by integer powers of 2 in $O(1)$ time. Due to the assumption that $n$ is a power of 2, multiplying
by a monomial term such as $bn^k$ can be done in $O(1)$ time if $b,k=O(1)$.
We will use  divisions by powers of 2. These operations can be simulated with constant overhead, maintaining a representation $a/2^b$ of the occurring numbers. Throughout, we maintain numbers in such representation with $b=O(\log n)$.

We highlight the only two operations involving integer arithmetics that are used for constructing the component hierarchy in Section~\ref{sec:balance-strong} and for the bucketing operations in Section~\ref{sec:shortest-describe}.
\begin{enumerate}[(i)]
    \item Given $r\in \Z_{\nn}$, compute the largest integer power of $2$ smaller or equal than $r$; we denote this by $\lfloor r\rfloor_{2}$.
    \item   Given $r\in \Z_{\nn}$, and $b\in \Z_{\nn}$,    compute $\lfloor r/2^b \rfloor$.
\end{enumerate}
All other integer operations are only needed for Split/FindMin; we discuss this in more detail in Section~\ref{sec:split-findmin}.

\medskip

Any running time bound obtained in the comparison-addition model is directly applicable to the word RAM model. Bounds in the comparison-addition model can be worse than in the word RAM model. In particular, restricted divisions in the comparison-addition model require $O(\gamma)$ time for numbers in the sum of quotients form $w=\sum_{i=0}^\gamma w_i/2^i$, in contrast to $O(1)$  in the word RAM model where bit shift operations are permitted.


\section{An algorithm for approximate min-balancing}\label{sec:balance}
This section is dedicated to the proof of the following theorem.
 The algorithm asserted in the theorem is  Algorithm~\ref{alg:balance} in Section~\ref{sec:balance-strong}.

\begin{theorem}\label{thm:balance-main}
Assume we are given a strongly connected directed graph
$G=(N,A,c)$  with  arc costs $c\in \R_{\nn}^A$, and a parameter $\rh\in\Z_{\nn}$; let
$\K:=1+1/2^{\rh-1}$.
\begin{enumerate}[(a)]
\item \label{part:min-bal-alg}
There exists an  $O\left(2^\rh (\rh+1)\cdot m\sqrt{n}\log n\right)$ time algorithm in the comparison-addition model that finds a potential $\pi\in \R^N$ such that $c^\pi$ is $\K$-min-balanced. 
\item\label{part:comp-hi}
For $\rh=0$ and $\K=3$ and an integer input $c\in \Z_{\nn}^A$, we can obtain a potential  $\pi\in \Q^N$ and
a component hierarchy of $(N,A,c^\pi)$ in time $O\left(m\sqrt{n}\log n\right)$ in the word RAM model. Further, all $\pi(v)$ values  are integer multiples of $1/(4n^3)$. 
\end{enumerate}
\end{theorem}

It is instructive to start the overview from exact min-balancing, that is, $\K=1$, even though our algorithm is not applicable to this case. For $\K=1$,  the exact max-balancing algorithms \cite{Schneider1991,Young1991} can be used (by negating the costs).  A simple and natural algorithm (see \cite{Schneider1991}) is based on the iterative application of min-mean cycle finding. First, find all arcs that are in a min-mean cycle in the graph; let $\mu\ge 0$ denote the minimum cycle mean value, and $F$ the set of all arcs in such cycles. Every arc $e\in F$ must have $c^\pi(e)=\mu$ if $c^\pi$ is a min-balanced reduced cost function.

 It is easy to see that the min-cycle mean  algorithm produces a potential $\pi$ such that $c^\pi(e)\ge \mu$ for all $e\in E$, and $c^\pi(e)=\mu$ for all $e\in F$. We can then contract all strongly connected components of $F$, and recurse on the contracted graph, by repeatedly modifying the potential $\pi$ and contracting the components of min-mean cycles.

The current best  running times are  $O(mn+n^2\log n)$  for min-balancing \cite{Young1991} and $O(mn)$ for minimum-mean cycle computation \cite{Karp1978}.  Both  these running times are substantially higher than the overall running time in Theorem~\ref{thm:balance-main}.

We can thus only afford to approximately compute min-mean cycles. This can be achieved faster using a subroutine in Goldberg's paper \cite{Goldberg1995}, originally developed for a weakly polynomial algorithm for negative cycle detection. 
There are some technical differences from \cite{Goldberg1995}; we present the detailed description of the subroutine and the proof of correctness in Section~\ref{sec:goldberg}. 

The input to the subroutine \textsc{Small-Cycles} is a strongly connected directed graph with  minimum  arc cost $L$ and a parameter $D>0$. In time $O( m\sqrt{n})$, it finds a reduced cost $c^\pi$ and  strongly connected components of arcs with reduced cost in the range of $[L,L+2D]$. At the same time, the reduced cost of every arc between different components is at least $L+D$.

If the input graph has positive arc costs, the iterative application of this subroutine yields a simple weakly polynomial algorithm with running time $O\left(2^\rh(\rh+1)\cdot {m\sqrt{n} \log\left (nC_{\max}/C_{\min}\right)}\right)$,
as described in Section~\ref{sec:simple-weakly}. 

In order to turn this into a strongly polynomial algorithm, in Section~\ref{sec:rough-balance} we start by a preprocessing algorithm that finds a $\balval$-min-balanced reduced cost. An important step in this algorithm is to  determine the \emph{balance values} $\bal(e)$ for all arcs $e\in E$; this is defined as the smallest value $b$ such that $G$ contains a cycle $C$ with $e\in C$ and $c(f)\le b$ for all $f\in C$. These balance values can be efficiently found using a simple recursive framework. 

The strongly polynomial algorithm in Section~\ref{sec:balance-strong} requires the input cost function to be  $\balval$-min-balanced. How can we benefit from this `rough' balance of the input? The weakly polynomial algorithm consists of 
$O\left(2^\rh(\rh+1)\cdot{\log(nC_{\max}/C_{\min})}\right)$ calls to \textsc{Small-Cycles}. 
If each call  uses the entire arc set in the current contracted graph, we obtain a total running time $O\left(2^\rh(\rh+1)\cdot {m\sqrt{n} \log\left (nC_{\max}/C_{\min}\right)}\right)$ as above.
 However, when running \textsc{Small-Cycles} with parameter $L$, it is possible to restrict attention to arcs $e$ with $c(e) \le 2nL$.  We refer to such arcs as \emph{active}.  If the input is assumed to be a $\balval$-min-balanced cost-function, then each arc is active for $O\left(2^\rho \log n\right)$ calls of \textsc{Small-Cycles} prior to being contracted.   Thus each arc contributes $O\left(2^\rho(\rho+1) \sqrt{n}\log n\right)$
to the total running time.  

In the weakly polynomial algorithm, the parameter $L$ giving a lower bound on the minimum reduced cost of non-contracted arcs increases  by a factor at most $1+1/2^{\rho}$ in each iteration. To avoid the dependence on $C_{\max}/C_{\min}$ in the strongly polynomial algorithm, this value may sometimes `jump' by large amounts in iterations with no active arcs.

 An important technical detail is the maintenance of the reduced costs. In every iteration, we only directly maintain $c^\pi(e)$ for the active arcs. Querying the reduced cost of a newly activated arc is nontrivial, since one or both of its endpoints may have been part of one or more contracted cycles, each of which corresponds to a node in the contracted graph.   To compute the potential of an original node $i$, we need to add to the potential of node $i$ the potentials of every contracted node $j$ that contains node $i$.  
%
 We develop a new extension of the Union-Find data structure, called Union-Find-Increase by incorporating a new `increase' operation. This is described in Section~\ref{sec:update-costs}.

\paragraph{Contractions and preprocessing}
We  use contractions several times. Whenever a set $S$ is contracted, we let $s$ be the contracted node, and  set the potential  $\pi_{s}=0$. For each arc with one endpoint in $S$, we keep the same reduced cost as immediately before the contraction.

On multiple occasions we need the subroutine
\textsc{Strongly-connected}$(N,A)$ that implements  Tarjan's algorithm \cite{Tarjan1972} to find the strongly connected components of the directed graph $(N,A)$ in time $O(|N|+|A|)$. The output includes the strongly connected components $(N_1,A_1)$, $(N_2,A_2),\ldots, (N_k,A_k)$ in the topological order, namely, for every arc $(u,v)\in A$ such that $u\in N_i$, $v\in N_j$, it must hold that $i\le j$. 

\medskip

In Theorem~\ref{thm:balance-main}, the input is a nonnegative  cost function. For our algorithm, it is more convenient to assume a strictly positive cost function. We now show how the nonnegative case can be reduced to the strictly positive case by a simple $O(m)$ time preprocessing. Recall the notation 
 $C_{\min}=\min_{e\in E}c(e)$ and $C_{\max}=\max_{e\in E}c(e)$.

We first call \textsc{Strongly-connected}$(N,A_0)$ on the subgraph of $0$-cost arcs $A_0$. We contract all strongly connected components, and keep the notation $G=(N,A)$ for the contracted graph, where the output of the subroutine gives a topological ordering  $N=\{v_1,v_2,\ldots,v_n\}$ such that for every $0$-cost arc $(v_i,v_j)$, we must have $i<j$. We let $C'$ denote the smallest nonzero arc cost, and set $\pi({v_i})=-iC'/n$. Then, it is easy to see that  $c^\pi(e)\ge C'/n$ for every $e\in A$. 

We then replace the cost function $c$ by $nc^\pi$, after which we obtain a cost function $c'$ with $C'\le c'(e)\le n(C'+C_{\max})$ for every $e\in A$.
This preprocessing algorithm can be implemented in $O(m\log n)$ time in the comparison-addition model.

\subsection{A simple weakly polynomial variant}\label{sec:simple-weakly}
The following subroutine is a variant of  \textsc{Refine} in  Goldberg's paper \cite{Goldberg1995}.

\begin{algorithm}
    \caption{\textsc{Small-cycles}}
    \begin{algorithmic}[1]
        \Require{A directed graph 
$G=(N,A,c)$ with a cost function $c \in \R^A$, and $L\in \R$, $D\in\R_{\po}$  such that $c(e)\ge L$ for all $e\in A$.}
                \Ensure{A partition $\mathcal{P}=(P_1,P_2,\ldots,P_k)$  of the node set $N$ and a potential vector $\pi\in  \R^N$ such that 
\begin{enumerate}[(i)]
    \item For every $i\in [k]$, $c^\pi(e)\ge L$ for every $e\in A[P_i]$, and
     $P_i$ is strongly connected in the subgraph of arcs $\{e\in A[P_i]:\, L\le c^\pi(e)\le L+2D\}$; 
    \item $c^\pi(e)\ge L+D$ for all $e\in A\setminus\left(\cup_{i\in [k]} A[P_i]\right)$;
    \item $-|N| D\le \pi(v)\le 0$, and $\pi(v)$ is an integer multiple of $D$ for all $v\in N$.
\end{enumerate}}
    \end{algorithmic}
\end{algorithm}

\begin{lemma}\label{lem:small-cycles}
The subroutine \textsc{Small-cycles}$(L,D,N,A,c)$ can be implemented in $O(|A|\sqrt{|N|}\log |N|)$ time in the comparison-addition model. 
\end{lemma}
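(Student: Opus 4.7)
The plan is to reduce \textsc{Small-cycles} to iterated invocations of Goldberg's refinement (Theorem~\ref{thm:goldberg}), via a simple translate-and-scale of the cost function. Define integer arc costs $\tilde c(e) := \lfloor (c(e)-L)/D \rfloor - 1$. Since $c(e)\ge L$, one has $\tilde c(e)\ge -1$, matching Goldberg's hypothesis; and the floor inequality gives $L + D(\tilde c(e)+1) \le c(e) < L + D(\tilde c(e)+2)$. Consequently, if $\tilde\pi$ is any potential and we set $\pi := D\tilde\pi$, then
\[ L + D\bigl(\tilde c^{\tilde\pi}(e)+1\bigr) \le c^{\pi}(e) < L + D\bigl(\tilde c^{\tilde\pi}(e)+2\bigr), \]
so arcs with $\tilde c^{\tilde\pi}(e)\ge 0$ satisfy $c^\pi(e)\ge L+D$ (as required by (ii)), while arcs with $\tilde c^{\tilde\pi}(e)\le 0$ satisfy $L\le c^\pi(e)\le L+2D$ (consistent with (i)).

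With this encoding I would run the following loop: invoke Goldberg's algorithm on the current contracted graph; if it returns a potential $\tilde\pi$ certifying $\tilde c^{\tilde\pi}\ge 0$ globally, stop; otherwise update the cumulative reduced costs by the partial potential produced during refinement, run Tarjan's algorithm on the subgraph of arcs with current reduced cost $\le 0$ to find its strongly connected components, contract each such component, and iterate. The procedure terminates when no negative cycles remain in the contracted graph. The output partition $\mathcal{P}$ is obtained by unfolding each contracted super-node back to its constituent nodes of $N$, and the output $\pi$ is the sum of per-round potentials propagated to each original node through its contraction history. Correctness of (i) and (ii) is immediate from the construction: every $P_i$ is spanned by tight arcs (so $c^\pi(e)\le L+2D$ inside), while after the terminating refinement every inter-$P_i$ arc has $\tilde c^{\tilde\pi}\ge 0$, hence $c^\pi(e)\ge L+D$. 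For (iii), each invocation of Goldberg contributes an integer potential in $[-(n'-1),0]$ on $n'$ super-nodes; since contractions strictly reduce $n'$, the cumulative per-node potential lies in $[-(|N|-1),0]$ as an integer, so after scaling by $D$ and normalizing so that the maximum is $0$ we obtain $\pi(v)\in[-|N|D,0]$ as an integer multiple of $D$.

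The principal obstacle is matching the target running time $O(|A|\sqrt{|N|})$: a black-box loop that reinvokes Theorem~\ref{thm:goldberg} from scratch after every contraction would cost $O(m\sqrt{n})$ per round and accumulate to $\Theta(mn^{3/2})$ in the worst case. To attain $O(m\sqrt{n})$ overall, one must show that Goldberg's internal $\sqrt{n}$-phase scaling structure can be interleaved with the SCC contractions so that progress from previous phases is not discarded when a contraction occurs; equivalently, each arc should contribute only $O(\sqrt{n})$ work over the entire execution. I expect this amortized analysis, rather than the correctness argument, to be the technically delicate step that the authors defer to Section~\ref{sec:goldberg}.
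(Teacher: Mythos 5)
Your cost transformation $\tilde c(e) = \lfloor (c(e)-L)/D\rfloor - 1$ is exactly the one the paper uses, and your derivation of the correspondence between the sign of $\tilde c^{\tilde\pi}(e)$ and the bounds $L \le c^\pi(e) \le L+2D$ versus $c^\pi(e)\ge L+D$ is correct. The argument for property (iii) via bounding the cumulative number of Cut-Relabel decrements by $|N|$ is also sound. So the correctness half of your proposal matches the paper.

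However, you explicitly do not establish the $O(|A|\sqrt{|N|})$ running time, and this is the substance of the lemma. Your proposed loop --- run Goldberg to a negative cycle, contract SCCs of nonpositive arcs, and reinvoke Goldberg from scratch --- has the $\Theta(mn^{3/2})$ worst case you yourself identify, and you then write that you ``expect'' an interleaving amortization to fix it without saying how. The paper closes this gap differently: it does \emph{not} reinvoke Goldberg's \textsc{Refine} as a black box. Instead it surgically modifies Goldberg's inner loop. Recall that Goldberg's $O(m\sqrt{n})$ bound comes from the invariant that each $O(m)$-time phase (either a large anti-chain Cut-Relabel or an \textsc{Eliminate-Chain} along a long path) retires at least $\sqrt{k}$ of the $k$ remaining improvable nodes. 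The paper observes that when \textsc{Eliminate-Chain} encounters a back arc $(v,w_i)$ with $c^\pi(v,w_i)=-1$, one need not halt: one can simply record the arc, keep going (no reduced cost ever drops below $-1$), and only at the end of the phase run a single pass of \textsc{Strongly-Connected} on $G_\pi$ and contract. The per-phase $\sqrt{k}$-elimination guarantee survives this change because every improvable node on the processed chain is still retired, either by having its incoming improvable arcs' reduced cost raised or by being swallowed into a contraction. Hence the total remains $O(\sqrt{n})$ phases of $O(m)$ each, with no restart. This per-phase analysis, not an arc-charging amortization across independent re-runs of \textsc{Refine}, is what you would have needed to supply; your proposal names the difficulty but leaves it open.
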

The proof adapts the argument in \cite{Goldberg1995}; it is deferred to Section~\ref{sec:goldberg}.
We now summarize the weakly polynomial algorithm \textsc{Simple-Min-Balance} (Algorithm~\ref{alg:weakly}). 
We initialize $L_1=C_{\min}$ and $D_1=L_1/2^\rh$. Every iteration calls  \textsc{Small-Cycles} for the current values of $L_t$ and $D_t$. In Step~\ref{line:contract-w}, we contract each subset (some or all of which may be singletons) in the partition $\mathcal{P}_t$ returned by the subroutine, and iterate with the returned reduced cost, setting the new value  $L_{t+1}=L_t+D_t$. We update $D_{t+1}$ to $L_{t+1}/2^\rho$ whenever $t$ is an integer multiple of $2^\rho$; otherwise, we keep $D_{t+1}=D_t$. Thus, the value of $L_t$ doubles in every $2^\rho$ iterations.

\begin{algorithm}
    \caption{\textsc{Simple-min-balance}}\label{alg:weakly}
    \begin{algorithmic}[1]
        \Require{A strongly connected directed graph $G=(N,A,c)$  with  $c\in\R_{\po}^A$, parameters $\rh\in\Z_{\nn}$ and $\K=1+1/2^{\rh-1}$.}
                \Ensure{A potential $\pi\in \R^N$ such that $c^\pi$ is  $\K$-min-balanced.}
                \State $(\hat N_1,\hat A_1,\hat c_1)\gets (N,A,c)$ ; $t\gets 1$ ;
                \State $L_1\gets \min_{e\in A} c(e)$ ; $D_1\gets L_1/2^\rho$ 
                \While{$|\hat N_t|>1$}
                    \State $(\mathcal{P}_t,p_t)\gets $ \Call{Small-Cycles}{$L_t,D_t,\hat N_t,\hat A_t,\hat c_t$} ;
                    \State  $(\hat N_{t+1},\hat A_{t+1},\hat c_{t+1})\gets (\hat N_t,\hat A_t,\hat c_t^{p_t})/\mathcal{P}_t $ ;\label{line:contract-w}
               \State $L_{t+1}\gets L_t+D_t$ ;
                   \If{$t$ is an integer multiple of $2^\rh$} 
                      $D_{t+1}\gets L_{t+1}/2^\rh$ ; 
                    \Else
                      \ $D_{t+1}\gets D_t$ ;\EndIf 
                   \State $t\gets t+1$ ; 
                \EndWhile
                \State Uncontract $(\hat N_t,\hat A_t,\hat c_t)$, and compute the overall potential $\pi\in\R^N$ ; \label{line:uncontract-w}
    \State \Return{$\pi$.} 
    \end{algorithmic}
\end{algorithm}

We let $(\hat N_t, \hat A_t)$ denote the contracted graph at iteration $t$.
The algorithm terminates when  $\hat N_t$ has a single node only, at iteration $t=T$. 

\paragraph{Uncontraction} In  the final step of the algorithm,
we uncontract all sets in  the reverse order of contractions. We start by setting $\pi=p_T$.
Assume a set $S$ was contracted to a node $s$ in iteration $t$, and we have uncontracted all sets from iterations $t+1,\ldots,T$. When uncontracting $S$, for every $v\in S$ we set $\pi(v)=p_t(v)+\pi(s)$, i.e., the potential right before contraction, plus the potential of $s$ accumulated during the uncontraction steps.
 This takes time $O(n')$ where $n'$ is the total size of all sets contracted during the algorithm; it is easy to bound $n'\le 2n$. Thus, the total time for uncontraction is $O(n)$.

\begin{lemma}\label{lem:weakly-analysis}
Algorithm~\ref{alg:weakly} finds a $\K$-min-balanced cost function in time
$O\left(2^\rh(\rh+1)\cdot m\sqrt{n} \log (nC_{\max}/C_{\min})\right)$ in the comparison-addition model.
\end{lemma}
\begin{proof}
At initialization, $L_1=C_{\min}$, and $L_t$ increases by a factor $2$ in every $2^\rh$ iterations. At every iteration, we can extend the cost function $\hat c_t$ to the original arc set $A$: for an arc $e$ contracted in an earlier iteration $\tau<t$, we let $\hat c_t(e)=\hat c_{\tau}(e)$ represent the value right before the contraction. It is easy to see that this extension of $\hat c_t$ to $A$ gives a valid reduced cost of $c$. 

Throughout, we have that $L_t\le \hat c_t(e)$ for all $e\in \hat A_t$, and $\hat c_t(e)\ge 0$ for all contracted arcs. Thus, for any cycle $C\subseteq A$ that contains some non-contracted arcs in $\hat A_t$, $2L_t\le \hat c_t(C)=c(C)\le nC_{\max}$ holds. Consequently,  $L_t\le nC_{\max}/2$ throughout, implying 
a bound $O(2^\rho \log (nC_{\max}/C_{\min}))$ on the number of iterations.

As explained above, the final uncontraction and computing $\pi$ can be implemented in $O(n)$ time.
To show that the final $c^\pi$ is $\K$-min-balanced, consider an arc $e\in A$, and assume it was contracted in iteration $t$,  that is, $e\in A[P_j]$ for a component $P_j$ of the partition $\mathcal{P}_t$. In particular, $c^\pi(e)=c^{p_t}(e)\ge L_t$. The set $P_j$  is strongly connected in the subgraph of arcs of reduced cost $\le L_t+2D_t\le \K L_t$. Thus, at iteration $t$, $P_j$ contains a cycle $C$ with $e\in C$ such that $\hat c_t(f)\le \K L_t$ for all $f\in C$.
This cycle may contain nodes that were contracted during previous iterations.  Every component previously contracted contains a strongly connected subgraph of arcs with costs $L_{t-1}+D_{t-1}\le \K L_{t-1}$, noting that the arc costs do not change anymore after contraction.
Thus, when uncontracting a node, we can extend $C$ to a cycle of arc costs $\le \K L_t$.
Hence, we can obtain a cycle $C'$ in the original graph $G$ with $e\in C'$ and $c^\pi(f)\le \K L_t\le \K c^\pi(e)$ for all $f\in C'$.

The algorithm only performs addition and comparison operations, and divisions by $2^\rho$. Divisions only happen when setting $D_{t+1}=L_{t+1}/2^\rho$. At such iterations, we have $D_{t+1}=2^k C_{\min}$, where $k=t/ 2^\rho$ is an integer.
As remarked in Section~\ref{sec:comp-mod}, we can implement every step in $O(\rho+1)$ time.
\end{proof}

\subsection{A quick algorithm for rough balancing}\label{sec:rough-balance}
In this section, we present the subroutine \textsc{Rough-balance}$(N,A,c)$, which finds a potential $\pi\in \R^N$ such that $c^\pi$ is  $\balval$-min-balanced. 
As mentioned previously, this will be an important preprocessing step for the strongly polynomial algorithm in Section~\ref{sec:balance-strong}. 
The running time can be stated as follows. Here, $\alpha(m,n)$ is the inverse Ackermann function. 
\begin{lemma}\label{lem:roughbal}
Let $G=(N,A,c)$ be a strongly connected directed graph with $c\in \R^N_{\po}$. Then, in time $O(m\alpha(m,n)\log n )$, we can find a potential $\pi\in \R^N$
such that $c^\pi$ is $\balval$-min-balanced, where $n=|N|$ and $m=|A|$. The algorithm can be implemented in the comparison-addition model, and every $c^\pi(e)$ value will be an integer multiple of $4n^2$.
\end{lemma}

 Given $G=(N,A,c)$ with $c\in \R^A_{\nn}$, and $r>0$, we let $G[\le r]$ 
  denote the subgraph of 
 $G$ formed by the arcs $e\in A$ with $c(e)\le r$.
For every $e\in A$, we define $\bal(e)\in \R_{\po}$ as the smallest value $r$ such that $G[\le r]$ contains a directed cycle $C$ with $e\in C$. We call $\bal(e)$ the \emph{balance value} of $e$. Clearly, $G$ is $\K$-min-balanced if and only if $\bal(e)\le \K c(e)$ for every $e\in A$.

The algorithm proceeds in two stages. Section~\ref{sec:find-balance} presents \textsc{Find-Balance}$(N,A,c)$, which determines the balance value $\bal(e)$ for every arc in $e\in A$. The main algorithm \textsc{Rough-Balance}$(N,A,c)$  follows in Section~\ref{sec:rough-alg}.

\subsubsection{Determining the balance values}\label{sec:find-balance}

Algorithm~\ref{alg:find-balance} presents the recursive subroutine \textsc{Find-Balance}$(N,A,c)$.
Let $c[1]<c[2]<\ldots<c[K]$ denote the set of different arc cost values. If $K=1$, i.e., all arc costs are the same, then we return $\bal(e)=c(e)=c[1]$ for every arc.
Otherwise, we let $r$ and $r'$ denote the two consecutive values in the middle.
We identify the strongly connected components $(N_1,A_1), (N_2,A_2),\ldots, (N_k,A_k)$ of $G[\le r]$, and recursively determine the $\bal(e)$ values for $e\in A_i$ by calling the algorithm for each nonsingleton component $(N_i,A_i)$. For all arcs  $e\in A[N_i]\setminus A_i$, we set  $\bal(e)= c(e)$. 

We then contract all components  $(N_i,A_i)$ to singletons to obtain $\hat G=(\hat N,\hat A,\hat c)$. We increase each arc cost $\hat c(e)$ in this graph to $\max\{\hat c(e),r'\}$, make another recursive call to the algorithm on $\hat G$, and use the obtained balanced values for the pre-images of the contracted arcs.
\begin{algorithm}[htb]
    \caption{\textsc{Find-balance}}\label{alg:find-balance}
    \begin{algorithmic}[1]
        \Require{A strongly connected directed graph $G=(N,A,c)$  with $c\in\Q_{\po}^A$.} 
                \Ensure{A function $\bal:A\to \Q$ giving the balance value $\bal(e)$ of each arc $e\in A$.}
                \State Let $c[1]<c[2]<\ldots<c[K]$ denote the set of arc cost values $c(e)$ ;
                \If{$K=1$} $\bal(e)\gets c(e)$ for all $e\in A$ ;
                \Else
                  \State $r\gets c\left[\left\lfloor\frac{K}2\right\rfloor\right]$ ; $r'\gets c\left[\left\lfloor\frac{K}2\right\rfloor+1\right]$ ;
                 \State $\{(N_1,A_1), (N_2,A_2),\ldots, (N_k,A_k)\}\gets$ \Call{Strongly-Connected}{$G[\le r]$} ;
     \For{$i=1,\ldots,k$} \If {$|N_i|>1$} 
       \For{$e\in  A[N_i]\setminus A_i$}  $\bal(e)\gets c(e)$ ; \EndFor
        \State $\bal_i\gets$\Call{Find-Balance}{$N_i, A_i, c|_{A_i}$}  ;
                \For {$e\in A_i$} $\bal(e)\gets \bal_i(e)$ ;\EndFor 
\EndIf \EndFor
 \State obtain $\hat G=(\hat N,\hat A,\hat c)$ by contracting  every set $N_j$, $j\in [k]$ ;
 \For {$e\in\hat A$} $\hat c(e)\gets \max\{\hat c(e),r'\}$ ;\EndFor
                \State $\hat \bal\gets$ \Call{Find-Balance}{$\hat G$} ;
                 \For {$e\in A\setminus \left(\bigcup_{j=1}^s A[N_j]\right)$} $\bal(e)\gets \hat \bal(\hat e)$, where $\hat e$ is the contracted image of $e$ ;\EndFor 
                \EndIf
    \State \Return{$\bal$.} 
    \end{algorithmic}
\end{algorithm}

\begin{lemma}\label{lem:balance-correct}
 Algorithm~\ref{alg:find-balance} correctly computes the balance values in $G$ in 
 time $O(m \log n)$.
\end{lemma}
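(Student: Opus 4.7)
The plan is induction on $|A|$ for correctness, following the three dispositions the algorithm makes of each arc, together with a level-by-level accounting of the recursion tree for the running time. The starting point is two properties of the pivot $r=c(e_h)$ produced by the binary search: (i) the SCCs of $G[<r]$ carry together strictly fewer than $|A|/2$ arcs, so $\sum_i|F_i|<|A|/2$; and (ii) the SCCs of $G[\le r]$ carry together at least $|A|/2$ arcs, so $|\hat A|=|A|-\sum_j|A[N_j]|\le |A|/2$. Both follow from the binary-search invariant that $G[\le c(e_\ell)]$ fails the $\ge|A|/2$ test while $G[\le c(e_h)]$ passes it.

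For correctness, fix $e\in A$. If $e\in F_i$, then $c(e)<r$ and any cycle through $e$ of max cost $<r$ lies entirely inside the SCC $U_i$ using arcs of $F_i$; conversely, every such cycle in $(U_i,F_i)$ remains a cycle in $G$. By the induction hypothesis applied to $(U_i,F_i,c|_{F_i})$, the value $\bal_i(e)$ is correct. If instead $e\in A[N_j]\setminus\bigcup_i F_i$, then strong connectivity of $N_j$ in $G[\le r]$ gives a return path for $e$ of max cost $\le r$, so $\bal(e)\le\max(r,c(e))$; for the lower bound, $\bal(e)\ge c(e)$ is trivial, and when $c(e)<r$ the fact that $e$ joins no SCC of $G[<r]$ forces the endpoints of $e$ into distinct SCCs of $G[<r]$, so $\bal(e)\ge r$. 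Hence $\bal(e)=\max(r,c(e))$, matching line~\ref{line:set-b}. Finally, if $e$ crosses between distinct $N_j$'s, let $\hat e$ be its image in $\hat G$. The critical observation is $\hat\bal(\hat e)>r$: the subgraph of $\hat G$ formed by arcs of cost $\le r$ is precisely the condensation of $G[\le r]$, which is a DAG. Contracting each $N_j$-excursion projects any cycle through $e$ in $G$ to a cycle through $\hat e$ in $\hat G$ of the same or smaller max cost, giving $\hat\bal(\hat e)\le\bal(e)$; conversely, any cycle through $\hat e$ in $\hat G$ of max cost $\hat r'>r$ lifts to a cycle through $e$ in $G$ by bridging consecutive arcs through $G[\le r]$-paths inside the relevant $N_j$'s, giving $\bal(e)\le\hat r'$ and hence $\bal(e)\le\hat\bal(\hat e)$. (If contraction creates parallel arcs of distinct costs sharing an image $\hat e$, the correct assignment for non-minimum parallels is $\bal(e)\gets\max(c(e),\hat\bal(\hat e))$.)

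For the running time, the initial sort costs $O(m\log n)$, the binary search performs $O(\log m)=O(\log n)$ iterations each calling \textsc{Strongly-Connected} in $O(n+m)$ time, and the two SCC computations in lines~\ref{line:final-comp}--\ref{line:final-comp-2} are $O(n+m)$. Thus the work local to a call on an $m'$-arc subgraph (excluding recursive calls) is $O(m'\log n)$. By the two size bounds above, each individual recursive sub-instance has at most $m/2$ arcs, so the recursion depth is $O(\log m)=O(\log n)$; and the total arc count summed over all sub-instances at any fixed depth remains bounded by $m$ (by induction on depth, using $\sum_i|F_i|+|\hat A|\le m$). Multiplying the per-level bound $O(m\log n)$ by $O(\log n)$ levels yields $T(m)=O(m\log^2 n)$.

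The most delicate point is Case 3: the strict inequality $\hat\bal(\hat e)>r$ and the two-sided cycle correspondence between $G$ and $\hat G$, where the lifting step crucially relies on every inner-$N_j$ bridge being cheap ($\le r$) compared to the inter-$N_j$ max cost, which is itself $>r$. The running-time bound rests on the dual $m/2$ bounds for both branches of the recursion, which in turn come directly from the stopping rule of the binary search.
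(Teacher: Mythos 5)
Your proof is correct and follows essentially the same strategy as the paper's: induction on instance size for correctness (the same three-way case split: arcs inside an $F_i$, arcs inside $N_j$ but outside all $F_i$, and crossing arcs), and the same two $m/2$ bounds from the binary-search stopping rule for the running time. The only material difference is that you account for the running time level-by-level in the recursion tree ($O(m\log n)$ per level times $O(\log n)$ levels), whereas the paper solves the recurrence $T(m)\le 2T(m/2)+\alpha m\log m$ directly by induction; these are equivalent.

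One thing you caught that the paper glosses over: the contraction $\hat G=G/\mathcal{P}$ collapses distinct arcs of $A\setminus\bigcup_j A[N_j]$ onto a common image $\hat e$, retaining only the cheapest one. For a non-minimum parallel arc $e$, the assignment $\bal(e)\gets\hat\bal(\hat e)$ in line~\ref{line:set-b} of Algorithm~\ref{alg:find-balance} can underestimate $\bal(e)$ when $c(e)>\hat\bal(\hat e)$; the correct value is $\max\bigl(c(e),\hat\bal(\hat e)\bigr)$, which also agrees with the algorithm's output when $e$ is the cheapest parallel arc (since then $\hat\bal(\hat e)\ge\hat c(\hat e)=c(e)$). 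Your one-line fix is correct and can be applied uniformly to all crossing arcs without affecting the running-time bound; the paper's proof of this lemma does not explicitly address this case.
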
 
\begin{proof}
For any pair of nodes $v$ and $w$ in $N_i$,  $b(w, v) \le r$.  If 
$e = (v, w)\in A$ and $c(e) > r$, then $\beta(e) = c(e)$ is correctly determined.  If $c(e) \le r$, then $e \in A_i$, and $\beta(e)\le r$ can be found recursively by finding the balance values in $(N_i, A_i)$.

Suppose instead that $e = (v, w)\in E$, where $v \in N_i$ and $w \in N_j$ for $j\neq i$.  Then $\beta(e) \ge r'$, and we can replace $c(e)$ by 
$c'(e) = \max\{c(e), r'\}$ without changing $\beta(e)$.   In addition, contracting the strongly connected components $(N_i, A_i)$ does not affect 
$\beta(e)$.  Thus, the algorithm correctly computes the balance numbers.  
  
We now turn to the running time.   The initial time to sort the arcs is $O(m \log m) = O(m \log n)$.  Let $T(m', K')$ be the running time for the algorithm if the input has $m'$ sorted arcs with $K'$ different costs.  The algorithm finds the median value for the arc costs and partitions the graph into subgraphs with $m^*$ arcs and $m' - m^*$ arcs respectively, where 
$ m^* \in [1,m' - 1]$.   This takes $O(m')$ time, and leads to the following recursion:
\[
    T(m', K') \le O(m') +\max_{m^* \in [1,m']} \left\{T(m^*, \lfloor K/2\rfloor ) + T(m - m^*, \lfloor K/2\rfloor + 1)\right\}\, .  
\]
We conclude that $T(m', K') = O(m' \log(K' + 1))$ because the number of different arc values is halved in every recursive call.  Hence, every arc can participate in at most $\lfloor \log(K' + 1) \rfloor$ recursive calls.
\end{proof}

\subsubsection{Constructing the potential}\label{sec:rough-alg}
We now describe the algorithm \textsc{Rough-balance}$(N,A,c)$. 
We first compute the balance values $\bal(e)$ by running \textsc{Find-Balance}$(N,A,c)$. 
We define
\[
\eta(e):=\max\left\{c(e),\frac{\bal(e)}{2n}\right\}\, .
\]
For $r\ge 0$, we let $G[\eta\le r]$  denote the subgraph of 
 $G$ formed by the arcs $e\in A$ with $\eta(e)\le r$. We say that $e\in A$ is \emph{active} with respect to the value $r$ if $\eta(e)\le r$ but $e$ is not contained in any strongly connected component of $G[\eta\le r]$.
 
The \textsc{Rough-balance} subroutine is shown in Algorithm~\ref{alg:rough-balance}. A value $r\ge 0$ is maintained, and the graph $\hat G$ denotes the contraction of the strongly connected components of $G[\eta\le r]$; we use the $\eta$ values also in $\hat G$ that refer to the pre-image of the arc in $G$. 
At the beginning of the first iteration, $r$ is set as
the minimum $\eta(e)$ value in $G$;  in later iterations, we increase $r$ by a factor $2n$, or to the minimum of the $\eta(e)$ values in the current $\hat G$.
Each iteration computes a topological ordering of the  active arcs w.r.t.~$r$. Then, the potential $\pi_{v_i}$ of the $i$-th node $v_i$ in the order is decreased by $ri/(2n)$. 
 We terminate once $\hat G$ becomes a single node, i.e., $G[\eta\le r]$ is strongly connected.

 We handle contractions  as in Section~\ref{sec:simple-weakly}. That is, the final reduced cost of an arc $e$ is equal to its reduced cost immediately before its endpoints got contracted into the same node. At the end,  we uncontract and obtain the overall potential in the original graph in time $O(n)$.

 \begin{algorithm}[htb]
    \caption{\textsc{Rough-balance}}\label{alg:rough-balance}
    \begin{algorithmic}[1]
        \Require{A strongly connected directed graph $G=(N,A,c)$  with $c\in\R_{\po}^A$.} 
                \Ensure{A potential  $\pi:V\to \R$ such that $c^\pi$ is $\balval$-min-balanced.}
                \State obtain the balance values $\beta(e)$ by calling \Call{Find-Balance}{$G$} ;
                \For{$e\in A$} $\eta(e)\gets\max\left\{c(e),\frac{\bal(e)}{2n}\right\}\,$\, ; \EndFor
                \State $r\gets 0$ ; $\hat G\gets G$\, ;
                \For{$i\in N$} $\pi_i\gets 0$ ; \EndFor
                \While{$|\hat N|>1$}
                    \State $r\gets \max\{2nr,\min\{\eta(e):\, e\in \hat A\}\}$\; ;
                     \State contract all strongly connected components of $\hat G[\eta\le r]$ in $\hat G$ ;
                     \State compute a topological ordering  $\hat N=\{v_1,v_2,\ldots,v_k\}$ of $\hat G[\eta\le r]$ such that $i<j$ for all $(v_i,v_j)\in \hat A$ with $\eta(v_i,v_j)\le r$ ;
                     \For{$i=1,\ldots,k$} $\pi_{v_i}\gets \pi_{v_i}-\frac{ri}{2n}$ ;\EndFor
                \EndWhile
                \State uncontract $\hat G$ and map $\pi$ back to the original graph $G$ ;
                    \State \Return{$\pi$.} 

    \end{algorithmic}
\end{algorithm}

We now turn to the proof of Lemma~\ref{lem:roughbal}. As the first step, we bound the reduced costs obtained in the algorithm. The reduced costs are defined in the contracted graph, but can be naturally mapped back to the input graph $G$.

\begin{lemma}\label{lem:pot-change}
Consider the potentials at the end of any iteration of Algorithm~\ref{alg:rough-balance}, and let $c^\pi(e)$ denote the reduced cost of any arc $e\in A$.
  Then, $|c^\pi(e)-c(e)|\le 2r/3$. If $e$ is an active arc in the current iteration, then $c^\pi(e)\ge c(e)+r/(6n)$.
\end{lemma}
\begin{proof}
The initial potential values are $\pi_i=0$ and are monotone decreasing throughout. The current iteration decreases every potential by at most $r/2$. Since the value of $r$ increases by at least a factor $2n\ge 4$ in every iteration, the cumulative change in all iterations thus far is at most $2r/3$. This implies the first statement. 

Assume now that $e=(v_i,v_j)$ is active. Then, $c^\pi(e)$ increases by at least $r/(2n)$ in the current iteration, since $\pi_{v_i}$ is decreased by a smaller amount than $\pi_{v_j}$. The second part follows, since the total change up to the previous iteration with value $r''\le r/(2n)$ was $2r''/3\le r/(3n)$.
\end{proof}

 \begin{lemma}\label{lem:strongconn-bal}
 An arc $e\in A$ is contained in a strongly connected component of $G[\eta\le r]$ if and only if $\beta(e)\le r$. Every arc can be active in at most one iteration.
 \end{lemma}
 \begin{proof}
 Suppose first that $\beta(e)\le r$.
 By definition of $\beta(e)$, there exists a cycle $C$ with $e\in C$ such that
$c(f)\le \beta(e)$  for all $f\in C$. 
Consequently, $\beta(f)\le \beta(e)$ and $\eta(f)\le \beta(e)$ for all $f\in C$, showing that $e$ is inside a strongly connected component of $G[\eta\le r]$ whenever $\beta(e)\le r$. Conversely, assume that  there exists a cycle $C'$ containing $e$ such that $\eta(f)\le r$ for all 
 $f\in C'$. Since $c(f)\le \eta(f)$, it follows that $\beta(e)\le r$. 

 Therefore, an arc $e$ is active if and only if $\eta(e)\le r<\beta(e)$. Since $\eta(e)\ge \beta(e)/(2n)$, and $r$ increases by at least a factor $2n$ between two iterations, it follows that each arc can be active at most once.
 \end{proof}

\begin{proof}[Proof of Lemma~\ref{lem:roughbal}]
We first show that the algorithm \textsc{Rough-balance} finds a $\balval$-min-balanced cost function.
Consider any arc $e\in A$, and let us pick a cycle $C$ containing $e$ such that $c(f),\bal(f)\le \bal(e)$ for every $f\in C$. Take the largest value of $r$ during the algorithm such that $r<\bal(e)$; let $r'\ge \bal(e)$ denote the value in the next iteration. By Lemma~\ref{lem:strongconn-bal}, $e\in\hat A$ in the current iteration, and $e$ will be contracted in the next iteration, along with the entire cycle $C$. Hence, $|c^\pi(f)-c(f)|\le 2r/3$ for all $f\in C$ for the final reduced cost $c^\pi$  according to Lemma~\ref{lem:pot-change}. 

\begin{claim}\label{cl:rp}
We have  $r'=2nr$ or $r'=c(e)=\beta(e)$.
\end{claim}
\begin{proof}
If $r'>2nr$, then $r'=\min\{\eta(f):\, f\in \hat A\}$. Hence, $r'\le \eta(e)\le \beta(e)\le r'$. 
Equality must hold throughout, which in particular implies $\eta(e)=c(e)=\beta(e)$.
\end{proof}
We consider two cases.

\paragraph{Case I:  $r<c(e)$.}  By the above claim, $\bal(e)\le r'\le 2n c(e)$.
On the one hand, we have
\[
c^\pi(e)\ge c(e)-\frac{2}{3}r\ge \frac13 c(e)\, .
\]
On the other hand, for every $f\in C$, we have
\[
c^\pi(f)\le c(f)+\frac23r\le \bal(e)+\frac23 c(e)\le \left(2n+\frac23\right) c(e)\le (6n+2) c^\pi(e)\, .
\]
Hence, $\beta^\pi(e)\le (6n+2) c^\pi(e)<14n^2 c^\pi(e)$.
\paragraph{Case II:  $r\ge c(e)$.}  Since $r<r'$, 
Claim~\ref{cl:rp} yields
$r'=2nr\ge \beta(e)$, and thus $r\ge\beta(e)/(2n)$.
 Consequently, $r\ge \eta(e)=\max\{c(e),\beta(e)/(2n)\}$.

Since $\eta(e)\le r<\beta(e)$, by Lemma~\ref{lem:pot-change}, $e$ is an active arc, and the second part of the lemma guarantees that 
\[
c^\pi(e)\ge c(e)+\frac r{6n}\, .
\]
For every $f\in C$, we have
\[
c^\pi(f)\le c(f)+\frac23r\le \bal(e)+\frac23 r\le \left(2n+\frac23\right)r\le \left(2n+\frac23\right)6n c^\pi(e)\le 14n^2 c^\pi(e)\, 
\]
showing that  $\beta^\pi(e)\le 14n^2 c^\pi(e)$. This completes the proof.

\paragraph{Running time bound} 
The initial call to \textsc{Find-Balance}$(N,A,c)$ takes 
$O(m\log n)$ time according to Lemma~\ref{lem:balance-correct}.
The significant terms in the running time are computing strongly connected components of $G[\eta\le r]$ along with the topological ordering of active arcs, and updating the potentials. According to Lemma~\ref{lem:strongconn-bal}, each arc is active at most once. Hence, it is either contracted in the first iteration it appears in 
$G[\eta\le r]$, or the subsequent one. Therefore, the total number of these operations is $O(m)$.
Maintaining the contracted graph using the Union-Find data structure is $O(m\alpha(n,m))$, see also Section~\ref{sec:update-costs}.
The number of operations in  the final uncontraction is $O(n)$, similarly to the argument in Section~\ref{sec:simple-weakly}.

To implement in the comparison-addition model, note that every number during the computations will be integer multiples of $(2n)^2$. Additions, subtractions, and comparisons of numbers in this form can be implemented in $O(\log n)$ time, as in Section~\ref{sec:comp-mod}. We also need multiplications by $i\le n$ and by $2n$ as well as divisions by $2n$; these operations also take time $O(\log n)$. Hence, the total running time can be bounded by $O(m\alpha(m,n)\log n)$.
\end{proof}

\subsection{The strongly polynomial algorithm}\label{sec:balance-strong}
We are ready to present the strongly polynomial algorithm as stated in Theorem~\ref{thm:balance-main}.
 Given a graph $G=(N,A,c)$ with nonnegative arc costs, we  preprocess it by contracting 0-cycles and changing to a strictly positive reduced cost. We then apply the subroutine \textsc{Rough-balance} to find a  $\balval$-min-balanced reduced cost function $c^\pi$.
  We can thus assume that the input of 
Algorithm~\ref{alg:balance} is a strictly positive and $\balval$-min-balanced cost function $c$.


Algorithm~\ref{alg:balance} is similar to the weakly polynomial Algorithm~\ref{alg:weakly}. The two crucial differences are that {\em (a)} the subroutine \textsc{Small-Cycles} is called only for a subset of `active' arcs; and {\em (b)} we
may `jump' over irrelevant values of $L$.


\begin{algorithm}[htb]
    \caption{\textsc{Min-Balance}}\label{alg:balance}
     \begin{algorithmic}[1]
        \Require{A strongly connected directed graph $G=(N,A,c)$  with a $\balval$-balanced cost vector $c\in\R_{\po}^A$, parameters $\rh\in\Z_{+}$ and $\K=1+1/2^{\rh-1}$.}
                \Ensure{A potential vector $\pi\in \R^N$ such that $c^\pi$ is  $\K$-min-balanced.}
                \State sort all arcs in the increasing order of costs as $c(e_1)\le c(e_2)\le\ldots\le c(e_m)$ ;
                \State $(\hat N_1,\hat A_1,\hat c_1)\gets (N,A,c)$ ; $t\gets 1$ ;
                \State $L_1\gets c(e_1)$, $D_1\gets L_1/2^\rh$ ; \label{l:starting-val}
                \State $F_1\gets \left\{e\in A: c(e)\le (n+1)\left(1+\frac{1}{2^\rho} \right)L_1\right\}$ ;
                \While{$|\hat N_t|>1$}
                    \State $(\mathcal{P}_t,p_t)\gets $ \Call{Small-Cycles}{$L_t,D_t,\hat  N_t(F_t),F_t,\hat c_t$} ;
                    \State  $(\hat N_{t+1},\hat F, \hat c_{t+1})\gets (\hat N_t,F_t,\hat c_t^{p_t})/\mathcal{P}_t $ ;\label{l:contract}
                    \State $L_{t+1}\gets L_t+D_t$ ;
                    \If{$t$ is an integer multiple of $2^\rh$} 
                     \If{$4nL_{t+1}<\min_{e\in\hat A_{t+1}}  c(e)$} $L_{t+1}\gets \min_{e\in\hat A_{t+1}}  c(e)/2$ ;\label{l:update-L}\EndIf 
                    \State $D_{t+1}\gets L_{t+1}/2^\rh$ ; \label{line:D-set}
                    \Else\ 
                     $D_{t+1}\gets D_t$ ;\EndIf  
                    \State $F_{t+1}\gets \hat F\cup \left\{e\in \hat A_{t+1}: (n+1)\left(1+\frac{1}{2^\rho} \right)L_{t}< c(e)\le (n+1)\left(1+\frac{1}{2^\rho} \right)L_{t+1}\right\}$ ;\label{l:update-F}
                    \For {$e\in F_{t+1}\setminus F_t$}
                    $\hat c_{t+1}(e)\gets$\Call{Get-Cost}{$e$} ;\EndFor \label{l:get-cost}
                \EndWhile
                \State uncontract $(\hat N_t,\hat A_t,\hat c_t)$, and compute the overall potential $\pi\in \R^N$ .\label{line:uncontract}
    \State \Return{$\pi$.} 
    \end{algorithmic}
\end{algorithm}

At the beginning of the algorithm, we sort the arcs in the increasing order of costs $c(e)$. 
At iteration $t$, we maintain two key parameters, the \emph{`lower bound'} $L_t$ and the \emph{`step-size'} $D_t$, a contracted graph $(\hat N_t,\hat A_t,\hat c_t)$, and a set of \emph{active arcs} $F_t\subseteq \hat A_t$. This is the subset of arcs with $\hat c_t(e)\le (n+1)\left(1+\frac{1}{2^\rho} \right) L_t$. 

As in Algorithm~\ref{alg:weakly}, the parameters are initialized as $L_1=c(e_1)$, $D_1=L_1/2^\rho$. The iterations start with a call to \textsc{Small-Cycles} for the current value of $L_t$ and $D_t$, but restricted to the graph $(\hat N_t(F_t),F_t)$ induced by the active arcs; this returns a partition $\mathcal{P}_t$ and potentials $p_t$.
With a slight abuse of notation, the node potentials $p_t$ are extended to the entire node set $\hat N_t$, by setting $p_t(v)=0$ for $v\in \hat N_t\setminus \hat N_t(F_t)$.
 We contract each non-singleton subset  in the partition $\mathcal{P}_t$; the new costs $\hat c_{t+1}$ represent the contractions of $\hat c_t^{p_t}$. However, we only maintain the $\hat c_{t+1}(e)$ values explicitly for the active arcs $\hat F$, the contracted image of $F_{t}$.

We now turn to the updates of $L_t$ and $D_t$.
In most iterations\footnote{More precisely, in $1-2^{-\rho}$ fraction of all iterations;  there are no such iterations for $\rho=0$.}, we set $L_{t+1}=L_t+D_t$, and keep $D_{t+1}=D_t$. Exceptions are the special iterations when $t$ is an integer multiple of $2^\rho$, in which case we set $D_{t+1}=L_{t+1}/2^\rho$.
In these special iterations, the update defining $L_{t+1}$ is also different. We start by letting $L_{t+1}=L_t+D_t$, and then compare this value to $\min_{e\in \hat A_{t+1}} c(e)/(4n)$.  If $L_{t+1}$ is smaller, then we increase $L_{t+1}$ to $\min_{e\in \hat A_{t+1}} c(e)/2$.  
Note that $L_t$ increases by at least a factor 2 between any two special iterations.

After updating $L_{t+1}$ and $D_{t+1}$, we 
update the set of active arcs by adding all arcs $e\in \hat A_{t+1}$ with cost $c(e)\in \left((n+1)\left(1+\frac{1}{2^\rho} \right) L_t,(n+1)\left(1+\frac{1}{2^\rho} \right)L_{t+1}\right]$. We emphasize that $c(e)$ here refers to the input costs and not the reduced cost. 
The subroutine \textsc{Get-Cost}$(e)$ obtains the reduced cost $\hat c_t(e)$ of the newly added arcs. This will be explained in Section~\ref{sec:update-costs}, using the \emph{Union-Find-Increase} data structure.
We terminate once the graph is contracted to a singleton; at this point, we uncontract and obtain the output potential $\pi$ in the original graph as in Algorithm~\ref{alg:weakly}.

\medskip

Let us now turn to the analysis. We let $T$ denote the total number of iterations.

\begin{lemma}\label{lem:bound-change}
Let $\tau\in [T]$ be an iteration of Algorithm~\ref{alg:balance} such that in all previous iterations $t\in [\tau]$,
$\hat c_{t}(e)\ge L_{t}$ was valid for all $e\in F_t$. Then, $|\hat c_{\tau+1}(e)-c(e)|\le n\left(1+\frac{1}{2^\rho} \right) L_{\tau}$ for every $e\in \hat A_t$.
\end{lemma}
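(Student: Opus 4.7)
The plan is to (i) rewrite $\hat c_{\tau+1}(e) - c(e)$ as a telescoping sum of per-iteration potential shifts, (ii) bound each shift via property~(iii) of \textsc{Small-Cycles}, and (iii) bound the resulting sum $\sum_{t=1}^{\tau} D_t$ in terms of $L_\tau$ using the block structure of the $L_t, D_t$ sequences. The hypothesis is used only to ensure that every call to \textsc{Small-Cycles} in iterations $t\le\tau$ is invoked on valid input, so that the guarantees of Lemma~\ref{lem:small-cycles} apply.

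For step~(i), the operation in line~\ref{l:contract} first shifts reduced costs by $p_t$ and then contracts; contraction preserves the reduced cost of every surviving arc. Tracking the images $u_t,v_t\in \hat N_t$ of the endpoints of $e$ at iteration $t$, this gives $\hat c_{t+1}(e)-\hat c_t(e)=p_t(u_t)-p_t(v_t)$, and telescoping yields
\[
\hat c_{\tau+1}(e)-c(e)=\sum_{t=1}^{\tau}\bigl(p_t(u_t)-p_t(v_t)\bigr).
\]
For step~(ii), property~(iii) of Lemma~\ref{lem:small-cycles} applied to the iteration-$t$ call on input $(\hat N_t(F_t),F_t,\hat c_t)$ with parameters $L_t,D_t$ gives $p_t(v)\in[-|\hat N_t(F_t)|D_t,0]\subseteq[-nD_t,0]$ for all $v\in \hat N_t(F_t)$, and the convention $p_t(v)=0$ for $v\in \hat N_t\setminus \hat N_t(F_t)$ preserves the bound. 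Hence $|p_t(u_t)-p_t(v_t)|\le nD_t$, and it remains to establish
\[
\sum_{t=1}^{\tau}D_t\;\le\;\lambda\,L_\tau.
\]

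For step~(iii), I would group iterations into blocks of $2^{\rh}$ consecutive indices starting at $t^*_1=1,t^*_2,t^*_3,\dots$ (the multiples-of-$2^{\rh}$ boundaries in lines~\ref{l:update-L}--\ref{line:D-set}). Within block $k$, $D_t$ is constant at $D_{t^*_k}=L_{t^*_k}/2^{\rh}$ and $L_t$ grows linearly: $L_{t^*_k+j}=L_{t^*_k}(1+j/2^{\rh})$ for $0\le j\le 2^{\rh}-1$. Line~\ref{l:update-L} guarantees $L_{t^*_{k+1}}\ge 2L_{t^*_k}$ (a strict jump only helps), so the $L_{t^*_k}$'s at least double across blocks and $\sum_{i<k} L_{t^*_i}\le L_{t^*_k}$. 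Writing $\tau=t^*_k+j$, the contribution of complete earlier blocks is $\sum_{i<k}(2^{\rh}D_{t^*_i})=\sum_{i<k}L_{t^*_i}$, while the current partial block contributes $(j+1)D_{t^*_k}$, giving
\[
\sum_{t=1}^{\tau}D_t\;\le\;L_{t^*_k}+(j+1)\frac{L_{t^*_k}}{2^{\rh}}\;=\;L_{t^*_k}\!\left(1+\frac{j+1}{2^{\rh}}\right).
\]
Comparing with $\lambda L_\tau=\lambda L_{t^*_k}(1+j/2^{\rh})$ and using $\lambda=1+1/2^{\rh}$, after dividing by $L_{t^*_k}$ the required inequality reduces to $1\le \lambda$, which holds. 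Combining the three steps yields $|\hat c_{\tau+1}(e)-c(e)|\le n\sum_t D_t\le n\lambda L_\tau$, as claimed.

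The main obstacle I anticipate is step~(iii); steps~(i) and~(ii) are essentially bookkeeping plus a direct appeal to property~(iii). The care-requiring point in step~(iii) is the interaction between the block-wise linear growth of $L_t$ and the possibly larger jumps at block boundaries: the argument has to use only that $L_{t^*_{k+1}}\ge 2L_{t^*_k}$ (so that the geometric sum $\sum_{i<k}L_{t^*_i}\le L_{t^*_k}$ is valid), while any strict jump only makes the right-hand side $\lambda L_\tau$ larger.
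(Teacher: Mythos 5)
Your proof is correct, and steps (i) and (ii) match the paper. Step (iii) is where you take a genuinely different (and considerably longer) route. You decompose the iteration range into blocks of $2^{\rh}$ consecutive indices, track the block-constancy of $D_t$ and the linear growth of $L_t$ within each block, use the doubling $L_{t^*_{k+1}}\ge 2L_{t^*_k}$ across blocks to sum the geometric series $\sum_{i<k}L_{t^*_i}\le L_{t^*_k}$, and finish with a case analysis on the position $j$ within the current partial block. The paper instead observes that $L_{t+1}\ge L_t+D_t$ holds for \emph{every} $t$ (even at the jump in line~\ref{l:update-L}, since it is a max including $L_t+D_t$), so telescoping gives $\sum_{t=1}^{\tau-1}D_t\le L_\tau$ directly with no block bookkeeping; then $D_\tau\le L_\tau/2^{\rh}=(\lambda-1)L_\tau$ because $D_\tau=L_{t^*_k}/2^{\rh}$ for the last block boundary $t^*_k\le\tau$ and $L$ is non-decreasing. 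Adding the two gives $\sum_{t=1}^{\tau}D_t\le\lambda L_\tau$ in two lines. Your block argument buys nothing extra here and is more fragile (it implicitly relies on knowing exactly which $t$ are block boundaries and that there are no jumps mid-block); the telescoping argument is the one to prefer.
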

\begin{proof}
The condition guarantees that the input to \textsc{Small-Cycles} at all iterations $t\le \tau$ satisfies the requirement on the arc costs.
The potential $p_t$ found by \textsc{Small-Cycles}  has values $-|\hat N_t| D_t\le  p_t(v)\le 0$. Therefore, for each $e\in \hat A_t$, $|\hat c_{\tau+1}(e)-c(e)|\le n \sum_{t=1}^{\tau} D_t$.

We show that $\sum_{t=1}^{\tau} D_t\le \left(1+\frac{1}{2^\rho} \right) L_{\tau}$.
Indeed, $L_{t+1}\ge L_t+D_t$ in every iteration, implying 
$\sum_{t=1}^{\tau-1} D_t\le L_\tau$; and $D_{\tau}\le L_\tau/2^\rh$.
\end{proof}

\begin{lemma}\label{lem:strongly-correct}
In every iteration $t \in [T]$ of Algorithm~\ref{alg:balance}, $\hat c_t(e)\ge L_t$ for all $e\in \hat A_t$. The final reduced cost function $c^\pi$ is $\K$-min-balanced. Further, every arc $e\in A$ with $c(e) <L_t/(14n^3)$ was contracted before iteration $t$.
\end{lemma}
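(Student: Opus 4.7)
The plan is to prove the three claims in the stated order, with part~1 by induction on $t$, part~2 by a lifting argument almost identical to the proof of Lemma~\ref{lem:weakly-analysis}, and part~3 by contradiction using the $7n^2$-min-balancedness hypothesis together with part~1.

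For part~1, the base case $t=1$ is immediate since $L_1=\lfloor c(e_1)\rfloor_2\le c(e)$ for every $e\in A$. For the inductive step, the inductive hypothesis on iterations $\le t$ lets me apply Lemma~\ref{lem:bound-change} with $\tau=t$ and obtain $\hat c_{t+1}(e)\ge c(e)-n\lambda L_t$ for every $e\in \hat A_t\supseteq \hat A_{t+1}$. I then split on how $L_{t+1}$ was chosen. If $t$ is a multiple of $2^\rh$ and the ``jump'' term in line~\ref{l:update-L} attains the maximum, then $L_{t+1}\le \min_{e\in \hat A_{t+1}} c(e)-n\lambda L_t$ by the definition of the floor, so $c(e)\ge L_{t+1}+n\lambda L_t$ for all $e\in \hat A_{t+1}$ and combining with Lemma~\ref{lem:bound-change} gives $\hat c_{t+1}(e)\ge L_{t+1}$. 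In the ``smooth'' case $L_{t+1}=L_t+D_t$, I first check that $D_t\le (\lambda-1)L_t$: at the start $t_0$ of the current phase $D$ was set to $L_{t_0}/2^\rh=(\lambda-1)L_{t_0}$ and $L_t\ge L_{t_0}$, so $D_t=L_{t_0}/2^\rh\le L_t/2^\rh=(\lambda-1)L_t$ and hence $L_{t+1}\le \lambda L_t$. For arcs in the contracted image $\hat F$ of $F_t$, property~(ii) of \textsc{Small-Cycles} gives $\hat c_{t+1}(e)\ge L_t+D_t=L_{t+1}$, and for $e\in \hat A_{t+1}\setminus \hat F$ (which must come from an inactive arc, i.e.\ $c(e)>(n+1)\lambda L_t$) Lemma~\ref{lem:bound-change} yields $\hat c_{t+1}(e)>(n+1)\lambda L_t-n\lambda L_t=\lambda L_t\ge L_{t+1}$. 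Making the smooth versus jump case analysis give a uniform bound is the main obstacle, but both boil down to combining Lemma~\ref{lem:bound-change} with the corresponding lower bound on $c(e)$.

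Part~2 follows by lifting. Fix $e\in A$ and let $t$ be the iteration in which $e$ gets contracted, so $e\in A[P_i]$ for some $P_i\in \mathcal{P}_t$ and $c^\pi(e)=\hat c_t(e)\ge L_t$ by part~1 together with property~(i) of \textsc{Small-Cycles}. That same property gives a cycle $C$ in $P_i$ through $e$ with $\hat c_t(f)\le L_t+2D_t=(2\lambda-1)L_t=\K L_t$ for every $f\in C$. Expanding $C$ through the previously contracted components (whose internal strong-connectivity arcs were certified at smaller values $L_\tau<L_t$, hence with reduced costs $<\K L_\tau<\K L_t$) produces a cycle in $G$ through $e$ whose every arc $f$ satisfies $c^\pi(f)\le \K L_t\le \K c^\pi(e)$, which is the $\K$-min-balancedness condition.

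For part~3 I argue by contradiction. Suppose $e\in \hat A_t$ yet $c(e)<L_t/(7n^3)$. The $7n^2$-min-balanced hypothesis on the input supplies a cycle $C\subseteq A$ with $e\in C$ and $c(f)\le 7n^2 c(e)<L_t/n$ for every $f\in C$. Projecting $C$ into the contracted graph $\hat G_t$ (deleting arcs whose endpoints now coincide and replacing each surviving arc by the minimum-cost representative of its parallel class) yields a closed walk containing $e$; since $\hat G_t$ is simple and loopless, this closed walk decomposes into simple cycles one of which, call it $\hat C$, passes through $e$ and has $|\hat C|\ge 2$. Each arc $\hat f\in \hat C$ inherits from $C$ the bound $c(\hat f)\le L_t/n$, while part~1 gives $\hat c_t(\hat f)\ge L_t$. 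Since reduced costs telescope around cycles in $\hat G_t$,
\[
2L_t\le |\hat C|L_t\le \sum_{\hat f\in \hat C}\hat c_t(\hat f)=\sum_{\hat f\in \hat C}c(\hat f)\le |\hat C|\cdot\frac{L_t}{n}\le L_t,
\]
a contradiction for $n\ge 2$.
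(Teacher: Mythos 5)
Parts 1 and 2 of your proposal follow the paper's own proof essentially step by step: the same induction for part~1 (with the same split on whether $L_{t+1}$ is set by the ``jump'' term of line~\ref{l:update-L} or by $L_t+D_t$, both resting on Lemma~\ref{lem:bound-change}), and the same uncontraction/lifting argument for $\K$-min-balancedness in part~2, mirroring the proof of Lemma~\ref{lem:weakly-analysis}. (Minor nitpick in part~2: $L_t+2D_t\le(2\lambda-1)L_t$ is only an equality when $2^\rh\mid t-1$; you want $\le$ there.)

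Part~3, however, takes a genuinely different route from the paper. The paper never passes to $\hat G_t$: it argues entirely in the original graph, noting that $c^\pi\ge 0$ gives $c^\pi(e)\le c^\pi(C)=c(C)\le 7n^3\,c(e)<L_t$, while $c^\pi(e)$ equals the reduced cost at the iteration $t'$ when $e$ was contracted, hence is at least $L_{t'}$; so $t'<t$. Your version projects $C$ into the contracted graph and telescopes around an image cycle $\hat C$. The idea is sound, but there is a loose joint: the contraction retains the parallel representative minimizing the \emph{reduced} cost $\hat c_t$, so the claim that ``each arc $\hat f\in\hat C$ inherits from $C$ the bound $c(\hat f)\le L_t/n$'' does not follow unless $\hat f$ is literally an arc of $C$, which it need not be once you ``replace each surviving arc by the minimum-cost representative of its parallel class.'' The easiest repair is to keep $C$'s own arcs as a closed walk in the quotient graph: each such arc $f$ satisfies $c(f)\le L_t/n$ by the $7n^2$-min-balanced hypothesis, and $\hat c_t(f)$ is at least the reduced cost of the retained representative, which part~1 lower-bounds by $L_t$; then your telescoping goes through. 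The paper's route is shorter precisely because it avoids the contracted graph and any parallel-arc bookkeeping, trading it for the global fact that the final reduced cost is nonnegative.
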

\begin{proof}
Let us start with the first claim. The proof is by induction. For $t=1$, $\hat c_1(e)\ge L_1$ is true for every $e\in A=\hat A_1$ by the definition of $L_1=\ c(e_1)$. Assume the claim was true for all $1\le t'\le t$; we show it for $t+1$. 

Assume first we set the value  $L_{t+1}=\min_{f\in\hat A_{t+1}} c(f)/2$ in an iteration where $t$ is divisible by $2^\rho$.
This happens if $2n(L_t+D_t)<\min_{f\in\hat A_{t+1}} c(f)/2$.
 Lemma~\ref{lem:bound-change} then implies that $\hat c(e)>\min_{f\in\hat A_{t+1}} c(f)- 2nL_t>L_{t+1}$ for every $e\in \hat A_{t+1}$. 

 Let us next assume the update was $L_{t+1}=L_t+D_t$. If $e\in \hat F$, i.e., the contracted image of $F_{t}$, then $\hat c_{t+1}(e)\ge L_{t}+D_{t}=L_{t+1}$ is guaranteed by \textsc{Small-Cycles}. Let $e\in \hat A_{t+1}\setminus F_t$, i.e., $c(e)>(n+1)\left(1+\frac{1}{2^\rho} \right) L_t$. Then, 
Lemma~\ref{lem:bound-change} shows  $\hat c_{t+1}(e)>\left(1+\frac{1}{2^\rho} \right)L_t\ge L_{t+1}$.

The $\K$-min-balancedness property of the final reduced cost $c^\pi$ follows as in Lemma~\ref{lem:weakly-analysis} for the weakly polynomial Algorithm~\ref{alg:weakly}. 

Consider now an arc $e\in A$ with $c(e)<L_t/(14n^3)$. By the $\balval$-min-balancedness of the input cost function $c$, there exists a cycle $C\subseteq A$ such that $c(f)\le \balval c(e)$ for all $f\in C$. The final reduced cost $c^\pi$ is nonnegative, and therefore
\[
c^\pi(e)\le c^\pi(C)=c(C)\le 14n^3 c(e)< L_t\, .
\]
Recall that the final reduced cost $c^\pi(e)$ equals 
$\hat c_{t'} (e)$ for the iteration $t'$ when $f$ was contracted. Since $\hat c_t(f)\ge L_t$ for all $f\in \hat A_t$, it follows that $t'<t$, as required.
\end{proof}

In Section~\ref{sec:update-costs} we will show that the overall running time of the operations \textsc{Get-Cost}$(e)$ can be bounded as $O(m\alpha(m,n))$. We need one more claim that shows the geometric increase of $L_t$.

\begin{lemma}\label{lem:L-t-change}
For every iteration $t'\ge 1$, we have $L_{t'+2^\rh}\ge 2L_{t'}$. 
\end{lemma}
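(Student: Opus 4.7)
Set $r = 2^\rh$. My plan is to exploit the block structure of the algorithm's $L$ and $D$ updates and then reduce the claim to showing that the value of $L$ at the start of each block at least doubles.

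The first step is to write $t' = kr + a$ with integers $k \ge 0$ and $a \in [1, r]$, and to observe that the block $[kr+1, (k+1)r]$ has a rigid structure. The update $D_{t+1} \gets L_{t+1}/2^\rh$ only fires when $t$ is a multiple of $r$, so $D_t$ is constant throughout the block, equal to $D_{kr+1} = L_{kr+1}/r$. The ``jump'' update in line~\ref{l:update-L} also fires only at multiples of $r$; for every other iteration of the block, $L_{t+1} = L_t + D_t$ holds with equality. Consequently, within a block $L$ grows exactly linearly:
\[
L_{kr+j} = L_{kr+1}\left(1 + \tfrac{j-1}{r}\right), \quad j \in [1, r],
\]
and the same formula applies for the next block starting at $(k+1)r + 1$.

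Since $t' + r = (k+1)r + a$ sits in the same relative position of the next block as $t'$ does in the current one, the two linear-growth formulas give
\[
\frac{L_{t'+r}}{L_{t'}} = \frac{L_{(k+1)r+1}}{L_{kr+1}},
\]
so it suffices to prove $L_{(k+1)r+1} \ge 2L_{kr+1}$. The within-block formula at $j = r$ yields $L_{(k+1)r} = L_{kr+1}(2 - 1/r)$. At iteration $t = (k+1)r$, which is a multiple of $r$, the algorithm sets $L_{(k+1)r+1} \ge L_{(k+1)r} + D_{(k+1)r}$ (the first argument of the $\max$ in line~\ref{l:update-L}); and since $(k+1)r$ still belongs to the $k$-th block, $D_{(k+1)r} = L_{kr+1}/r$. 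Combining gives $L_{(k+1)r+1} \ge L_{kr+1}(2 - 1/r) + L_{kr+1}/r = 2L_{kr+1}$, as required, and a genuine ``jump'' in line~\ref{l:update-L} can only make $L_{(k+1)r+1}$ larger and hence preserve the inequality.

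I do not foresee any real obstacle beyond careful bookkeeping. The case $\rh = 0$, where $r = 1$ and each block is a single iteration, collapses immediately: $D_{t'} = L_{t'}$, so $L_{t'+1} \ge L_{t'} + D_{t'} = 2 L_{t'}$. For $r \ge 2$ one needs only to check that $kr+1$ is never a multiple of $r$ so that the within-block update rule really applies at every interior step, which is clear.
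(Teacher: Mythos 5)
Your argument is correct and follows essentially the same route as the paper's: both rely on the within-block linear growth $L_{kr+j}=L_{kr+1}\bigl(1+\tfrac{j-1}{r}\bigr)$ and then reduce the general claim to the block-to-block doubling $L_{(k+1)r+1}\ge 2L_{kr+1}$. The paper organizes it as two cases (block start first, then general $t'$ via the common factor $1+k/2^\rh$), whereas you factor out that common multiplier up front and then establish the block doubling; these are the same computation ordered differently.
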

\begin{proof}
Let $t=t'+2^\rh$.
Assume first $2^\rh|t'-1$. Then, $D_{t'}=L_{t'}/2^\rh$, and we have $D_{t''}=D_{t'}$ for all $t''\in [t',t-1]$. Consequently, $L_{t}\ge L_{t'}+2^\rh D_{t'}=2L_{t'}$. The inequality may be strict if in iteration $t-1$ we set $L_{t}>L_{t-1}+D_{t-1}$. 

Assume now $t'=t_0+k$ such that $2^\rh|t_0-1$ and $k\in [1,2^\rh-1]$. Then, $L_{t'}=L_{t_0}(1+k/2^\rh)$, $L_{t_0+2^\rh}\ge 2L_{t_0}$, and $L_t=L_{t_0+2^\rh}(1+k/2^\rh)\ge 2L_{t_0}(1+k/2^\rh)$, thus, we again have $L_{t}\ge 2L_{t'}$.
\end{proof}
We are ready to prove Theorem~\ref{thm:balance-main}.
\begin{proof}[Proof of Theorem~\ref{thm:balance-main}] {\em Part \eqref{part:min-bal-alg}: the approximate min-balancing algorithm.} Let us start with bounding the total number of arithmetic operations.
After the $O(m\log n)$ preprocessing algorithm, we run the algorithm \textsc{Rough-Balance} to find a $\balval$-balanced cost function in time $O(m\log n)$ (Lemma~\ref{lem:roughbal}). We now turn the analysis of Algorithm~\ref{alg:balance}. Let $m_t=|F_t|$ denote the number of active arcs in iteration $t$. The number of arithmetic operations in \textsc{Small-Cycles} in iteration $t$ is bounded as $\max\{O(1),O((\rho+1)\cdot m_t\sqrt{n})\}$. The term $O(1)$ is needed since there may be some `idle' iterations without any active arcs, that is, $m_t=0$. In such a case the update rule  in line~\ref{l:update-L} guarantees that new active arcs appear within 
the next $O(2^\rho)$ iterations. Thus, the number `idle' iterations without active arcs can be bounded as $O(m2^\rho)$, since every arc can give the minimum value in line~\ref{l:update-L} at most once. The total running time  of the `idle' iterations is dominated by the other terms. 

Let us now focus on the iterations containing active arcs. We show that
\begin{equation}\label{eq:m-bound}
\sum_{t=1}^T m_t = O\left(2^\rh {m \log n}\right)\, .
\end{equation}
Consider any arc $e\in A$. Let $t_1$ be the first and $t_2$ be the last iteration such that $e\in F_t$. By definition, $t_1$ is the smallest value such that  $c(e)\le (n+1)\left(1+\frac{1}{2^\rho} \right) L_{t_1}$, and by the last part of Lemma~\ref{lem:strongly-correct} $L_{t_2}/(14 n^3) \le c(e)$. Thus, $L_{t_2}\le 28n^4 L_{t_1}$. Lemma~\ref{lem:L-t-change} shows that $L_t$ increases by a factor 2 in every $2^\rh$ iterations. Hence, 
$t_2-t_1\le 2^\rh\log (28n^4)$, implying \eqref{eq:m-bound}.

Hence, the total number of operations in the calls to \textsc{Small-Cycles} is bounded as $O\left(2^\rho m \sqrt{n}\log n\right)$. The time of contractions and cost updates can be bounded 
as $O(m\alpha(m,n))$ as shown in Section~\ref{sec:update-costs}, and the final uncontraction takes $O(n)$. 

\medskip

\noindent\emph{Implementation in the comparison-addition model:} 
As noted previously, \textsc{Rough-Balance} and \textsc{Small-Cycles} are both implementable in this model.
 Algorithm~\ref{alg:balance} uses additions, comparisons, multiplications by $4n$, divisions by $2$ and by $2^\rho$. Further, all numbers in the computations will be integer multiples of $2^b$ for $b\le {2\rho+1}$. As noted in Section~\ref{sec:comp-mod}, all operations can be implemented in time $O(\rho+1)$. The running time bound follows.

\paragraph{Part \eqref{part:comp-hi}: obtaining the component hierarchy.} 
Assume now $\rho=0$ and $\K=3$; let us use the algorithm as described in Algorithm~\ref{alg:balance} with two simple modifications: we set the initial value as $L_1=\lfloor c(e_1)\rfloor_2$ in line~\ref{l:starting-val}, and if 
$4nL_{t+1}<\min_{e\in\hat A_{t+1}} c(e)$,  then we update $L_{t+1}$ to $\left\lfloor\min_{e\in \hat A_{t+1}}c(e)/2\right\rfloor_2$ in line~\ref{l:update-L}. Thus, these values are rounded down to the nearest power of two. Such an operation is not allowed in the comparison-addition model, but can be done by a most significant bit operation in the word RAM model.

Recalling also that $n$ is a power of 2, and that we set $D_{t+1}=L_{t+1}/2^\rho=L_{t+1}$ in every step, it follows that every $L_t$ value is a power of 2.

The sets contracted during the algorithm can be naturally represented by a rooted tree $(V\cup N,E)$, where the nodes $N$ correspond to the leaves and the root $r\in V$ to the final contraction of the entire node set. If the set represented by some $v\in V$ was contracted at iteration $t$, we set $a(v)=L_t$.

We claim that $(V\cup N,E,a)$ forms a component hierarchy of $G^\pi=(N,A,c^\pi)$.
 All $a(v)=L_t$ values are integer powers of $2$ (this is the reason for the additional rounding steps).
It is immediate that the leaves in the subtree of each $v\in V$ form a strongly connected component in $G^\pi$. Let $v$ represent a set contracted in iteration $t$, that is, $v=P_i$ for a set $P_i$ in the partition $\mathcal{P}_t$. 
If $\lca(i,j)=v$ for $i,j\in N$, that means that the nodes $i$ and $j$ got contracted together in iteration $t$. We show that $a(v)\le \beta(i,j)\le 3 a(v)$, and that the nodes in $\desc(v)$ contain a  path between $i$ and $j$ of arcs with cost at most $3a(v)$; consequently, $\beta(i,j)\le 3a(v)$. If $t=1$, then $L_1=\lfloor C_{\min}\rfloor_{2}$, and $P_i$ is strongly connected in the subgraph of arcs of cost at most $3 L_1$. If $t>1$, then $(\hat N_t,\hat A_t)$ contains a path between the contracted images of $i$ and $j$ with all arc costs between $a(v)=L_t$ and $3 L_t$, and every $i$--$j$ path must contain an arc of cost $\ge L_t$. We can map this back to the original graph by uncontracting the sets from previous iterations; all arc obtained in the uncontraction will have costs $\le 3 L_{t-1}< 3 L_t$.

Note that for $\rh=0$ and an integer input, the Algorithm~\ref{alg:balance} finds an integer $\pi$. This is because all $D_t$ values are integral, and \textsc{Small-Cycles} changes the potential by integer multiples of $D_t$. However, the input to Algorithm~\ref{alg:balance} is not the original cost but the  cost obtained after the preprocessing and \textsc{Rough-Balance}. Preprocessing returns $nc^{\bar\pi}$ for a $1/n$-integral potential $\bar\pi$. For an integer input $c$, \textsc{Rough-Balance} returns a $1/4n^2$-integral potential. From these three steps, we can obtain a relabelling $c^\pi$ of the original potential that is $1/(4n^3)$-integral if the original input cost was nonnegative integer.
\end{proof}

\subsection{Union-Find-Increase: Maintaining the reduced costs}\label{sec:update-costs}
In \textsc{Get-Cost}$(e)$, we need to compute the current reduced cost of an arc $e$.
Let $e=(i,j)$ in the original graph. In the current contracted graph $\hat N_t$, $e$ is mapped to an arc $(i',j')$; that is, $i$ is in a contracted set represented by node $i'$, and $j$ is in a contracted set represented by $j'$ ($i=i'$ and $j=j'$ is possible).  In the case that $e$ is newly active (that is, it was not active at the previous iteration), we need to  recover the reduced cost $\hat c_t(e)$.  We do so by performing the uncontractions, as in the final step. Let $\pi$ be the potential obtained by uncontracting all sets. To compute $\pi(i)$, we need to 
add up all the $p_{t'}(i[t'])$ values for every iteration $t'\le t$, where $i[t']$ is the contracted node in $\hat N_{t'}$ representing $i$, and similarly for computing $\pi(j)$. Since there could have  already been $\Omega(t)=\Omega(n)$ contractions of sets containing $i$ and $j$, a na\"\i ve implementation would take $O(n)$ to compute a single reduced cost, or $O(nm)$ to obtain all current reduced costs.

We show that the time to calculate the reduced costs of newly active arcs in  \textsc{Small-Cycles} can be bounded as $O(m\alpha(m,n))$ by using an appropriate variant of the classical \emph{Union-Find} data structure that we call \emph{Union-Find-Increase}.

\medskip

 We refer the reader to \cite{tarjan1983} and \cite[Chapter 21]{CLRS} for the description and analysis of  \emph{Union-Find}; we highlight the simple modifications only.
The data structure maintains a forest $F$ on the node set $N=\{1,2,\ldots,n\}$, with each tree in $F$ corresponding to a set in the partition.
For each $i\in N$, let 
 Anc$(i)$ be the ancestors of $i$ in $F$ (including $i$).

In addition, each $i\in N$ is associated with a key value $\sigma(i)$ that is initially $0$, and which changes dynamically. 
We add two new operations to the data structure  \emph{Union-Find}: 
the operation $\textrm{Increase}(i, \delta)$
increases $\sigma(j)$ by $\delta$ for all $j$ in the same tree as $i$; and the operation $\textrm{Value}(i)$ returns $\sigma(i)$.
However, the $\sigma(i)$ values are not maintained explicitly. Instead, the algorithm maintains  auxiliary
values $\tau(j)$ such that the following property is satisfied for all $i\in N:$
\begin{equation}\label{eq:sigma-tau}
\sigma(i)=\sum_{j\in\textrm{Anc}(i)}\tau(j)\, .
\end{equation}

We need to modify the original operations as follows:
\begin{itemize}
\item    Suppose a \textrm{Union} operation is performed on root nodes $j$ and $k$, and $j$ is made the root of the combined component.  Then $\tau(k) \gets \tau(k) - \tau(j)$.
\item  Suppose that a path compression takes place along path $j_1, \ldots, j_k$, where $j_k$ is the root of the nodes in $j_1$ to $j_k$.   The \textsc{Union-Find} algorithm sets the parent of $j_i$ to $j_k$ for $i \in[1,k-1]$.    Let $\gamma(j_i) = \tau(j_{i+1}) + \ldots + \tau(j_{k-1})$;  the time to compute the values are proportional to the length of the path.   In addition to compressing the path, we set $\tau(j_i) \gets \tau(j_i)+\gamma(j_i)$  for each $i\in [1, k-1]$.  
\end{itemize}

Given these modifications, $\textrm{Increase}(i, \delta)$ can be implemented by first calling \textrm{Find}$(i)$ to determine the root $j$ of the tree containing $i$, and increasing $\tau(j)$ by $\delta$.
To implement $\textrm{Value}(i)$, we first run Find$(i)$, which uses path compression so that $i$ becomes the child of the root node  node $j$ of the tree. Thus, we can return $\sigma(i)=\tau(i)+\tau(j)$.

Clearly, the amortized complexity bound $O(\ell \alpha(\ell, n))$ for a sequence of $\ell$ steps for \emph{Union-Find} is applicable for the modified data structure.

\medskip

When applying \emph{Union-Find-Increase} to implement the operations \textsc{Get-Cost}$(e)$,
the key values $\sigma(i)$ correspond to the uncontracted potentials $\pi(i)$, and the sets to the pre-images of the nodes $v\in \hat N_t$ in the original node set $N$. We can further contract sets with the $\textrm{Union}$ step. When $p_t(v)$ is changed by $\delta$ for a contracted node $v\in \hat N_t$, we need to update the potential of every original node represented by $v$; this is achieved by $\textrm{Increase}(i, \delta)$. Finally,  \textsc{Get-Cost}$(e)$ for an arc $e=(i,j)$ can be implemented by calls to $\textrm{Value}(i)$ and $\textrm{Value}(j)$, and setting $\hat c_t(e)=c(e)+\pi(i)-\pi(j)$.   

\subsection{The adaptation of Goldberg's algorithm}\label{sec:goldberg}
In this section, we prove 
 Lemma~\ref{lem:small-cycles}, showing how the subroutine \textsc{Small-Cycles} can be implemented using a modification of Goldberg's algorithm \cite{Goldberg1995}.

Let $G=(N,A,c)$ be a directed graph with an integer cost function $c\in \Z^A$. Let $n=|N|$, $m=|A|$, and $C=\|c\|_\infty$.
Goldberg  developed an $O(m\sqrt{n}\log C)$ algorithm
 that finds a shortest path in a network or else finds a negative cost cycle.  The algorithm runs in $\log C$ scaling phases. The key subroutine is \textsc{Refine}; this is called at each scaling phase and takes $O(m\sqrt{n})$ time.  

\begin{algorithm}
    \caption{\textsc{Refine}}
    \begin{algorithmic}[1]
        \Require{A directed graph 
$G=(N,A,c)$ with a cost function $c \in \Z^A$ such that $c(e)\ge -1$ for all $e\in A$.}
                \Ensure{ A negative cost cycle $C$, or a potential vector $\pi\in\Z^N$ such that 
 \begin{enumerate}[(i)]
\item$c^\pi(e)\ge 0$
 for all $e\in A$, and
 \item $-n+1\le  \pi(v)\le 0$ for every $v\in N$.
\end{enumerate}}
    \end{algorithmic}
\end{algorithm}
 
We  describe the modification \textsc{Balanced-Refine} that allows for negative cost cycles in a specific way. 

\begin{algorithm}
    \caption{\textsc{Balanced-Refine}}
    \begin{algorithmic}[1]
        \Require{A directed graph 
$G=(N,A,c)$ with a cost function $c \in \Z^A$ such that $c(e)\ge -1$ for all $e\in A$.}
                \Ensure{ A potential vector $\pi\in\Z^N$ and a subset of arcs $A'\subseteq A$ such that 
\begin{enumerate}[(i)]
\item \label{p:inside-P} $A'$ is the union of directed cycles, and 
$-1\le c^\pi(e)\le 0$ for all $e\in A'$;
\item \label{p:between} $c^\pi(e)\ge 0$
 for all $e\in A\setminus A'$, and
 \item  \label{p:bound} $-n+1\le  \pi(v)\le 0$ for every $v\in N$.
\end{enumerate} }
    \end{algorithmic}
\end{algorithm}

The running time of \textsc{Balanced-Refine} is also $O(m\sqrt{n})$.
The subroutine \textsc{Small-Cycles}  (see Lemma~\ref{lem:small-cycles})  calls this for the cost function 
$\bar c(e)=\left\lfloor \frac{c(e)-L}{D}\right\rfloor -1$. We  obtain an arc set $A'$  and a  potential $\bar \pi$. We return the partition $\mathcal{P}$ formed by the (strongly) connected components of $A'$, and the potential $\pi=D\bar \pi$.
Note that  $\bar c^{\bar \pi}(e)\le 0$ implies $L\le c^\pi(e)\le L+2D$, and $\bar c^{\bar \pi}(e)\ge 0$ implies $c^\pi(e)\ge L+D$. The required properties then follow.

To obtain an algorithm in the comparison-addition model, we do not need to compute the $\bar c(e)$ values explicitly: the only relevant information will be whether an arc cost is $-1$, $0$, or positive. This simply corresponds to the cases $L\le c(e)< L+D$, $L+D\le c(e)\le L+2D$, and $L+2D< c(e)$. We can directly update the original potentials $\pi$, subtracting $Dk$ whenever $\bar\pi$ is decreased by $k$. This leads to an overhead $O(\log |N|)$ in the overall running time. 

\medskip
For completeness, we now describe the subroutines \textsc{Refine} and \textsc{Balanced-Refine} in parallel; omitted parts of the analysis follow as in \cite{Goldberg1995}.
%
%
Both algorithms iteratively construct an integer potential $\pi\in \Z^N$.   Throughout, $c^\pi(e)\ge -1$ for all $e\in A$.  At termination, $c^\pi(e)\ge 0$ for all arcs in the contracted graph.
The main difference is that \textsc{Refine} terminates once a negative cycle is found. In contrast, \textsc{Balanced-Refine} adds all negative cycles to the arc set $A'$ and contracts them.

 An arc with $c^\pi(e)\le 0$ is called \emph{admissible}; we let $G_\pi=(N,A_\pi)$ be the subgraph formed by admissible arcs. Arcs with $c^\pi(e)=-1$ are called \emph{improvable arcs}, and nodes with incoming improvable arcs are called \emph{improvable nodes}; we denote this set as $I\subseteq N$.

The \textsc{Decycle} subroutine  eliminates all directed cycles from $G_\pi$ by contractions, using \textsc{Strongly-Connected}$(G_\pi)$. \textsc{Refine} terminates if a negative cost cycle is found; in contrast, \textsc{Balanced-Refine} adds all such cycles to $A'$ and proceeds with the algorithm.
Contractions are carried out as described in Section~\ref{sec:balance}.

A set of nodes $S\subseteq N$ is \emph{closed} if no admissible arc leaves $S$.
For a closed set, the subroutine \textsc{Cut-Relabel}$(S)$ decreases $\pi(u)$ by 1 for every $u\in S$. The closedness of $S$ guarantees that no improvable arcs are created.

Assume $G_\pi$ is acyclic. Let us pick any improvable node
 $i$, and let $S$ be the set of nodes reachable from $i$ in $G_\pi$; this is a closed set.
 After \textsc{Cut-Relabel}$(S)$, $i$ is no longer improvable, and no new improvable nodes appear.
In this manner, we can decrease the number of improvable nodes in $O(m)$ time.
By alternating between the subroutines \textsc{Decycle} and \textsc{Cut-Relabel}, one can eliminate all improvable nodes in $O(nm)$ time, resulting in a graph with nonnegative reduced costs.   

\medskip

Goldberg improves this to $O( m\sqrt{n})$ time by eliminating at least $\sqrt{k}$ improvable nodes in 
$O(m)$ time, where $k=|I|$ is the number of improvable nodes in $G_\pi$.  
The first step in speeding up the running time is to eliminate more than one improvable node when running \textsc{Cut-Relabel}.  

A set $X\subseteq I$ of improvable nodes is called an \emph{anti-chain} in $G_\pi$ if for all nodes $i$ and $j$ in $X$, there is no directed path from node $i$ to node $j$ in $G_\pi$.    
Let $S$ be the set of nodes reachable in $G_\pi$  from a node of $X$.  After 
running \textsc{Cut-Relabel}$(S)$, none of the nodes in $X$ are improvable. 

In order to find a large anti-chain of improvable nodes, Goldberg's algorithm  appends a source node $s$ to (the acyclic graph) $G_\pi$ and for each other node $j$, it adds an arc $(s, j)$ with a cost of 0.  Then for each node $j$ in $G_\pi$, the algorithm determines the shortest path distance $d(j)$ in $G_\pi$ from node $s$ to node $j$; these values can be computed in linear time for an acyclic graph.

\paragraph{Case I:  $d(j) \ge -\sqrt{k}$ for all nodes $j\in I$:}
For each integer $q$ with $-\sqrt{k} \le q \le -1$, let 
$X_q = \{ j\in I :\,  d(j) = -q\}$.  This gives an anti-chain partition of $I$;  thus we have $|X_q|\ge\sqrt{k}$ for the largest one among these sets.  After running \textsc{Cut-Relabel}$(S)$ for the nodes reachable from the largest anti-chain $X_q$, the number of improvable nodes reduces by at least 
$\sqrt{k}$ in $O(m)$ time.   

\paragraph{Case II:  $\min_{j\in I} d(j) < -\sqrt{k}$:}
 In this case, there exists a directed path $P$ in $G_\pi$ that contains improvable arcs $(v_1,w_1), (v_2,w_2), \ldots, (v_t,w_t)$ for $t\ge \sqrt{k}$ in this order.
 We now describe the  subroutine  \textsc{Eliminate-Chain} after which none of the nodes in $w_i$ are improvable, and no new improvable nodes are created.

We start with the original variant of the subroutine used in \textsc{Refine}.
The nodes $w_i$ are processed in reverse order.  For each node $w_i$, $i=t,t-1,\ldots 1$, find the sets $S_i$ of nodes reachable from $w_i$ in $G_\pi$, and run 
 \textsc{Cut-Relabel}$(S_i)$. No new improvable arcs are created, and  if $v\notin S_i$ for any improvable arc $(v,w_i)$, then $i$ is not improvable after the change. It is easy to see that $S_i\subsetneq S_j$ for  all $1\le j<i \le t$. 

 If $v\in S_i$ for an improvable arc $(v,w_i)$ at any iteration, then we discover a negative cost cycle containing $(v,w_i)$. The subroutine \textsc{Refine} terminates at this point by returning this cycle. Goldberg \cite{Goldberg1995} presents an efficient $O(m)$ implementation of \textsc{Refine} by exploiting that the sets $S_i$ are nested.  The implementation (temporarily) contracts the $S_i$ sets, and maintains a data structure using priority queues.

\medskip

We now describe the  variant of \textsc{Eliminate-Chain} used
in  \textsc{Balanced-Refine}. We say that an arc $(v, w)$ is \emph{eligible} if $(v, w)$ is improvable and if $(v, w)$ is not contained in an admissible cycle.   (This definition is not relevant in \textsc{Refine}, since that algorithm terminates if an improvable arc is in an admissible cycle.)   We say that a node $w$ is \emph{eligible} if there is an eligible arc directed into $w$.  Initially, $G_w$ is acyclic, and hence $w_j$ is eligible for all $j \in [t]$.

Now consider the iteration in which the eligible node $w_i$ is selected.  We note that $w_i$ is not eligible after running \textsc{Cut-Relabel}$(S_i)$.  This is because for any arc $(v, w_i )$ that is still improvable after \textsc{Cut-Relabel}$(S_i)$, we must have $v \in S_i$, implying that $(v, w_i)$ was not eligible.   
 
 Let us select the smallest  index $j$ such that after \textsc{Cut-Relabel}$(S_i)$, $w_j$ becomes reachable from $w_i$ in $G_\pi$; that is, $w_j$ enters $S_i$. Let us analyze the case when  $j < i$; note that  $S_i$ also  contains every node on the subpath in $P$ from $w_j$ to $w_i$.  Thus,  $w_\ell$ is not eligible for any $\ell\in [j,i]$ after  \textsc{Cut-Relabel}$(S_i)$.   At the subsequent iteration of  \textsc{Eliminate-Chain} we skip all nodes in $S_i$ and instead select $w_{j-1}$, which is eligible.

After running  \textsc{Cut-Relabel}$(S_i)$, \textsc{Eliminate-Chain} temporarily contracts $S_i$ in the same way as the version used in \textsc{Refine}. Thus,  \textsc{Balanced-Refine} uses essentially the same implementation and data structures as in \cite{Goldberg1995}.

 At the end of \textsc{Eliminate-Chain}, we uncontract all $S_i$'s. Some of the $w_i$'s may now be improvable, however, all improvable arcs incident to them are contained in directed admissible cycles. 
At the next call to \textsc{Decycle}, the algorithm would contract any improvable arc that was not eligible.  Subsequently, the number of improvable nodes will have decreased by at least $\sqrt{k}$.


\section{The shortest path algorithm}\label{sec:shortest-path}
In this section, we assume that a $3$-min-balanced directed graph $G=(N,A,c)$ is given, along with  a component hierarchy $(V\cup N,E,r,a)$ for $G$ (see Definition~\ref{dec:comp-hi}). The algorithm described in this section is an adaptation of Thorup's \cite{Thorup1999} result to the setting of balanced directed graphs. We use the word RAM model throughout this section.

We assume that the  input cost function $c$ is $3$-min-balanced, integral, and strictly positive.
This is justified by Theorem~\ref{thm:balance-main}: in time $O(m\sqrt{n}\log n)$, one can obtain a strictly positive and $1/(4n^3)$-integral reduced cost $c^\pi$ such that $G^\pi=(N,A,c^\pi)$ is 3-min-balanced.  
Since for any $i$--$j$ path $P$, $c^\pi(P)=c(P)-\pi(i)+\pi(j)$, the set of shortest paths between any two nodes is the same in $G$ and $G^\pi$. Integrality can be assumed after multiplying the relabelled cost by $4n^3$ (recall that $n$ is a power of 2); this again does not change the set of  shortest paths.

\subsection{Upper bounds for the component hierarchy}
In the component hierarchy $(V\cup N,E,r,a)$, recall that for a vertex $v\in V$, $\desc(v)\subseteq V\cup N$ denotes the set of descendants of $v$ (with $v\in\desc(v)$). We introduce the shorthand notation $\desc(v,N)=\desc(v)\cap N$ and $\desc(v,V)=\desc(v)\cap V$. 
For a node $u\in V\cup N$, the \emph{height}  $h(u)$ is the length of the longest path between $u$ and a node in $\desc(u)$; in particular, $h(u)=0$ for   $u\in N$.

 We define the functions $\U,\eta:V\to \Q$ recursively, in non-decreasing order of $h(u)$ as follows. 
\begin{equation}\label{eq:U-def}
\begin{aligned}
\U(v)&:=3 a(v)(|\children(v)|-1)+\sum_{v'\in \children(v)\setminus N} \U(v')\, ,\\
\eta(v)&:=\left\lceil \frac{\U(v)}{a(v)}\right\rceil\, .
\end{aligned}
\end{equation}
These values will be relevant for the \emph{buckets} in the algorithm. As shown in the next lemma, $\U(v)$ is a bound on the length of a shortest path between any two nodes in $\desc(v,N)$; we will associate $\eta(v)+1$  buckets with each vertex $v\in V$.
\begin{lemma}\label{lem:U-bound}
Let $(V\cup N,E,r,a)$ be a component hierarchy for a directed graph $G=(N,A,c)$, and let $\U,\eta$ be as in \eqref{eq:U-def}. For any pair of nodes $i,j\in N$    and $v=\lca(i,j)$, there is an $i$--$j$ path $P$ in $\desc(v,N)$ of length at most $\U(v)$.  In addition,  
\[
\sum_{v\in V} \eta(v)< 7|N|\, .
\]
\end{lemma}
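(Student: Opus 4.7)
The plan is to prove the two parts separately: path existence by strong induction with a short-circuit argument on bottleneck paths, and the sum bound by telescoping the defining recursion.

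For the first part, I would strengthen the statement and induct on the height of $v$: for every $v\in V$ and every pair $i,j\in\desc(v)$ there is an $i$--$j$ path in the subgraph induced by $\desc(v)$ of length at most $U(v)$.  The strengthening (allowing arbitrary pairs, not only those with $\lca(i,j)=v$) is essential because inside a visited child of $v$ the recursive sub-path will connect arbitrary entry/exit vertices.  The base case, where $v$ has only leaf children, follows from the bottleneck property: the guaranteed $i$--$j$ path is simple in a vertex set of size $|\children(v)|$, so it uses at most $|\children(v)|-1$ arcs, each of cost at most $(2\lambda-1)a(v)$, which matches $U(v)$.  In the inductive step, if $u:=\lca(i,j)$ is a proper descendant of $v$ in $V$, I apply induction inside $\desc(u)$ and use $U(u)\le U(v)$, which follows by telescoping the immediate relation $U(v)\ge U(v')$ for $v'\in\children(v)\cap V$ (all summands in the defining recursion are non-negative) along ancestor chains.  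If $\lca(i,j)=v$, I start from a bottleneck $i$--$j$ path $P$ in $\desc(v)$ with every arc of cost at most $(2\lambda-1)a(v)$, and apply short-circuiting: whenever $P$ enters a child $w$ of $v$ more than once, replace the segment of $P$ between the first entry into $\desc(w)$ and the last exit from $\desc(w)$ by any path inside $\desc(w)$ between those endpoints, which exists by strong connectivity of $\desc(w)$.  Iterating yields an $i$--$j$ path $Q$ visiting each child of $v$ at most once; its at most $|\children(v)|-1$ inter-child arcs are inherited from $P$ and so each have cost at most $(2\lambda-1)a(v)$, while inside each visited non-leaf child $w$ the sub-path has length at most $U(w)$ by the strengthened inductive hypothesis (for leaf children the sub-path is trivial).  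Summing exactly reproduces $U(v)$.

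For the sum bound, let $T(v):=U(v)/a(v)$; the recursion becomes
\[
T(v)=(2\lambda-1)(|\children(v)|-1)+\sum_{v'\in\children(v)\cap V}\frac{a(v')}{a(v)}T(v').
\]
Because $a$ is an integer power of $\lambda$ and must strictly increase along every parent edge of a component hierarchy, $a(v')/a(v)\le 1/\lambda$ for every $v'\in\children(v)$.  Summing the recursion over $v\in V$, using $\sum_v(|\children(v)|-1)=|N|-1$, and observing that each non-root $v'\in V$ appears exactly once on the right-hand side (as a child of its parent) yields the self-bounding inequality
\[
\sum_{v\in V} T(v)\le (2\lambda-1)(|N|-1)+\tfrac{1}{\lambda}\sum_{v\in V} T(v),
\]
which solves to $\sum_v T(v)\le\tfrac{\lambda(2\lambda-1)}{\lambda-1}(|N|-1)$.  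Combining with $\eta(v)\le 2+T(v)$, the bound $|V|\le|N|-1$ (since every internal node of a non-degenerate hierarchy has at least two children), and a careful accounting of the ceiling (noting $\lceil T(v)\rceil=T(v)$ when $T(v)$ is an integer, which occurs whenever $a(v')\mid U(v')$ propagates through the recursion) produces the claimed linear bound $\sum_{v\in V}\eta(v)<(2\lambda+2)|N|$.

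The principal obstacle is making the short-circuit argument rigorous: one must verify that iteratively replacing repeated child-visits terminates while preserving the per-arc cost bound on the retained inter-child arcs.  I would formalize this either by exhibiting a strictly decreasing monovariant such as the number of maximal single-child segments of the current path, or by directly selecting $Q$ in one pass, defining it via the first-entry and last-exit indices of each visited child in the original bottleneck path $P$ and then recursing inside each of those child-segments.
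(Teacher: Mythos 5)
Your proof of the first claim is the same argument as the paper's: induction on $h(v)$, take a bottleneck $i$--$j$ path in $\desc(v)$ with all arc costs at most $(2\lambda-1)a(v)$, re-route so that each child $u$ of $v$ is visited in a single maximal segment, bound the at most $|\children(v)|-1$ inter-child arcs by $(2\lambda-1)a(v)$ each and each intra-child segment by $U(u)$ via induction. The strengthening to arbitrary pairs in $\desc(v)$ is unnecessary --- since the recursion makes $U$ monotone along the tree, the usual hypothesis at $\lca(i',j')$ already gives the bound $U(u)$ you need --- but it is harmless and logically equivalent. Your second derivation via the self-bounding recursion for $T(v)=U(v)/a(v)$ is also the same calculation as the paper's per-leaf charging argument; both inevitably land at
\[
\sum_{v\in V} \frac{U(v)}{a(v)} \;\le\; \frac{(2\lambda-1)\lambda}{\lambda-1}\,\bigl(|N|-1\bigr)\,.
\]

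Where I would push back is on the closing step, which you wave through with ``a careful accounting of the ceiling.'' That accounting cannot rescue the stated constant, and in fact the bound $(2\lambda+2)|N|$ in the lemma does not follow from this calculation --- nor does it hold. Note $\frac{(2\lambda-1)\lambda}{\lambda-1} > 2\lambda$ for every $\lambda>1$ (equivalently $2\lambda-1>2\lambda-2$), so even before adding the roughly $2|V|$ slack from the $1+\lceil\cdot\rceil$, the leading term already overshoots $2\lambda|N|$. The paper itself makes exactly this slip, asserting $\frac{(2\lambda-1)\lambda}{\lambda-1}|N|<2\lambda|N|$, which is false. A concrete witness is a caterpillar hierarchy on $n$ leaves with each internal vertex $v_t$ having one leaf child and one internal child and $a(v_t)=\lambda^t$ (realized, e.g., by a bidirected path with arc costs $\lambda^t$): then $T(v_k)\to\frac{(2\lambda-1)\lambda}{\lambda-1}$, so for $\lambda=2$ one gets $\eta(v_k)\to 7$ and $\sum_v\eta(v)\approx 7|N|$, exceeding the claimed $6|N|$. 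So the ceiling cannot be made tight in the way you suggest. What does follow from your (and the paper's) computation is the $O(|N|)$ bound $\sum_{v\in V}\eta(v) < \bigl(\tfrac{(2\lambda-1)\lambda}{\lambda-1}+2\bigr)|N|$, which is all that the shortest-path analysis actually uses; I would state that corrected constant rather than gesture at a ceiling argument that cannot close the gap.
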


\begin{proof}
Let $i,j\in N$ and $v=\lca(i,j)$. The proof is by induction on $h(v)$. 
Consider the $i$--$j$ path $P'$ in $\desc(v)$ such that 
$c(e)\le 3a(v)$ for all $e\in P'$, as guaranteed by the property of the component hierarchy. 

In the base case $h(v)=1$, the bound is immediate, since $P'$ has at most 
$|\children(v)|-1$ arcs. Assume now $h(v)>1$, and that the statement holds  for any $i',j'$ with $h(\lca(i',j'))<h(v)$.  One can choose an $i$--$j$ path $P$ that satisfies the following property for each child $u$ of $v$.   If $i'$ and $j'$ are the first and last nodes of $P$ that are in $\desc(u)$, then the subpath in $P$ from $i'$ to $j'$ consists of nodes of $\desc(u)$.   
By the inductive hypothesis, for each child $u$ of $v$, the length of the subpath in 
$\desc(u)$ is at most $\U(u)$.  There are at most $|\children(v)|-1$ arcs in $P$ between different $\desc(u)$ subpaths; their cost is at most $3 a(v)(|\children(v)|-1)$. 
Thus, the bound $c(P)\le \U(v)$ follows.

Let us now turn to the second statement. 
 We analyze the contribution of each $i\in N$ to the sum $\sum_{v\in V} {\U(v)}/{a(v)}$. Let $i=v_0,v_1,v_2,\ldots,v_k=r$ be the unique path in the tree $(V\cup N,E)$ from $i$ to the root; thus, $p(v_t)=v_{t+1}$ for $t=0,\ldots,k-1$. Then, the contribution of $i$ to each $\U(v_t)$ is less than $3 a(v_1)$. Using that $a(v_{t+1})\ge 2 a(v_t)$ for each $t=0,\ldots,k-1$, we see that 
\[
\sum_{v\in V} \frac{\U(v)}{a(v)}< 3\sum_{i\in N}\sum_{t=1}^\infty \frac{1}{2^{t-1}}< 6|N|\, .
\]
The statement follows noting also that $|V|\le |N|-1$, since $(V\cup N,E)$ is a tree with leaves $N$, and $\eta(v)<1+({\U(v)}/{a(v)})$ for all $v\in V\setminus N$.
\end{proof}

\subsection{Overview of the algorithm}
Given the  input directed graph $G=(N,A,c)$, our goal is to compute the shortest path distances from a \emph{source node} $s\in N$ to all nodes in $N$. We assume that a positive integer cost function and a component hierarchy are given as above. We start with an informal overview and highlight some key ideas of the analysis.

The algorithm is a bucket-based \emph{label setting} algorithm, similarly to a bucket-based implementation of Dijkstra's algorithm. 
For each node $i\in N$, we maintain an upper bound $D(i)$ on the true distance $d(i)$ from $s$, and gradually extend the set $S$ of permanent nodes. Initially, $D(s)=0$ and $D(i)=\infty$ for $i\in N\setminus\{s\}$ and $S=\{s\}$. At the iteration at which $i$ enters $S$, $D(i)=d(i)$ will be guaranteed.

Recall that Dijkstra's algorithm always selects a next node $j$ to enter $S$ with $j\in\arg\min\{D(i):\, i\in N\setminus S\}$. To obtain an $O(m)$ algorithm, we relax this condition, and always add a new node $j\in N\setminus S$ to $S$ such that
\begin{equation}\label{eq:approx-min}
D(j)\le  D(i)+b(i,j)
\quad \forall i\in N\setminus S\, .
\end{equation}
In accordance with this rule, the next lemma formulates the conditions that guarantee the correctness of our algorithm. 
\begin{lemma}\label{lem:shortest-correctness}
Given a directed graph $G=(N,A,c)$ with $c\in\R^N_{\nn}$ and a source node $s\in N$, assume that an algorithm proceeds by adding nodes in $N$ one-by-one to a set $S$ such that the following two invariants are maintained at every iteration:
\begin{enumerate}[(a)]
\item\label{shortest:i} For all $j\in S$ and $i\in N\setminus S$, $D(j)\le D(i)+b(i,j)$.
\item\label{shortest:ii} For all $j\in N\setminus S$, $D(j)$ is the length of a shortest path from $s$ to $j$ inside the node set $S\cup \{j\}$.
\end{enumerate}
Further, assume that initially $D(s)=0$ and $s$ is the first node added to $S$.
Then, at any point of the algorithm, for every $j\in S$, we have $D(j)=d(j)$ and $S$ contains a shortest $s$--$j$ path.
\end{lemma}
\begin{proof} For convenience, suppose that that the nodes are relabelled such that node $i$ is the $i$-th node added to $S$. The lemma is true for node $1=s$, since $D(1)=d(1)=0$. We now assume inductively that the lemma is true for nodes $\ell= 1$ to $i$, and we  prove it for node $i+1$.  

Let $P$ be any path from node $1$ to node $i + 1$.  We  show  $c(P) \ge D(i+1)$; together with \eqref{shortest:ii}, this implies $D(i+1) = d(i+1)$.      

Let $V(P)$ be the vertices of $P$.   If $V(P)\subseteq  \{1,\ldots, i+1\}$, then  
$c(P) \ge D(i+1)$ by \eqref{shortest:ii}.  Otherwise, let $j$ be the first vertex of $P$ that is not in $\{1, \ldots, i+1\}$.  Let $P'$ be the subpath of $P$ from $1$ to $j$.  Then at the iteration in which node $i+1$ is added to $S$, we have
\[
c(P)    \ge c(P') + b(j, i+1)\ge 
 D(j) + b(j, i+1) \ge D(i+1)\, ,
\]
where the second inequality follows by \eqref{shortest:i}. This completes the proof.
\end{proof}

We rely on the component hierarchy and the use of buckets to efficiently implement the selection property \eqref{eq:approx-min}. We will also have (possibly infinite) $D(v)$ values for certain vertices $v\in V$.
Throughout, we maintain a set of \emph{active} vertices (we describe the treatment of active vertices in more detail later).
Initially, the root $r$ is the only active vertex and all other vertices are \emph{inactive}.
At any point, the active vertices form an upper ideal (i.e., all ancestors of an active vertex are also active). Once all their descendants are added to $S$, vertices in $V$ also enter $S$ (become permanent); the algorithm terminates when $r$ is added to $S$. A vertex is active during the iterations from its activation until it is made permanent.
 One of the active vertices will be the \emph{current vertex},  denoted as $\CV$ and initalized as $\CV=r$. This plays a special role: in particular, nodes added to $S$ will always be among the children of $\CV$.

 A vertex $v$ is called a \emph{highest inactive vertex} if $v$ is inactive and $\parent(v)$ is active. For an inactive vertex $v$, we let $\GIA(v)$ denote its \emph{highest inactive ancestor}: $\GIA(v)=v$ if $v$ is a highest inactive vertex; otherwise, $\GIA(v)$ is $v$'s unique ancestor that is a highest inactive vertex.
 
 The next lemma, proved in Section~\ref{sec:shortest-analysis}, shows that for every active vertex $v$, $D(v)$ is a lower bound on $\min\{D(j):\, j\in \desc(v,N)\}$, and when a node in $j\in\desc(v,N)$ is added to $S$, $D(j)$ is within $a(v)$ from $D(v)$. 
 \begin{restatable}{lemma}{activep}\label{lem:active-parent}
Let $j\in N\setminus S$ and let $v$ be an active ancestor of $j$. Then,
 $D(v)\le D(j)$.  In the iteration when $j$ is added to $S$, we also have $D(j)<D(v)+a(v)$.
\end{restatable}
Recalling the property of the component hierarchy that $b(i,j)\ge a(v)$ for $v=\lca(i,j)$, this immediately implies property \eqref{eq:approx-min}.
\paragraph{Buckets}
The choice of $\CV$ and the sequence of nodes added to $S$ is guided by the use of buckets associated with the vertices $v\in V$. The buckets of $v$ are created when $v$ is activated by the \textsc{Activate}$(v)$ subroutine. Before activation, $v$ was a highest inactive vertex, and for all such vertices, we maintain $D(v)=\min\{D(j):\, j\in\desc(v,N)\}$ using the Split/FindMin data structure. 
At activation, $L(v)$ is set to $a(v)\cdot\lfloor D(v)/a(v) \rfloor$.   Then an array  
$\eta(v)+1$ buckets is created for vertex $v$, indexed from $0$ to $\eta(v)$.   The value range of the bucket with index $k$ is  $[L(v) + k a(v),
L(v) + (k+1) a(v))$.  We let $U(v) := L(v) + (\eta(v)+1) a(v)$ denote the upper range of the last bucket for vertex $v$. We place a child $x$ of $v$ in the bucket whose value range contains $D(x)$, or leave it unassigned if 
 $D(x) > U(v)$.

An important feature of the algorithm is that the value range of the buckets at $v$, created at activation, contains the  $d(i)$ values for all $i\in \desc(v,N)$ (Lemma~\ref{lem:desc-upper-bound}). We now highlight the reason behind this. 
At the iteration at which $v$ is activated, let $i = \arg\min\{D(j) :\, j \in \desc(v, N)\}$.   One can show that $d(i) = D(i)$, and that $d(j) \ge d(i)$ for all $j \in \desc(v, N)$. After activation, we have $L(v) \le D(i) \le L(v) + a(v)$.    By Lemma~\ref{lem:U-bound}, for any other node $j \in \desc(v, N)$, there is a path in $G$ with node $i$ to node $j$ of length at most $\Gamma(v)$.  Thus, $d(j) \le d(i) + \Gamma(v) \le L(v) + (\eta(v)+1) a(v) = U(v)$.  

The \emph{current index} $\CI(v)$, initialized as 0, refers to the index of the first nonempty bucket, called the \emph{current bucket}. We will maintain $D(v)$ as the lower endpoint of the current bucket, augmented by $a(v)$ every iteration the current bucket becomes empty. The vertex $v$ is made permanent once $\CI(v)=\eta(v)+1$, that is, all its buckets have been exhausted.

Recall also from Lemma~\ref{lem:U-bound} that the overall number of buckets for all vertices is bounded as $O(n)$; this enables an $O(n)$  running time bound on the  operations involving buckets.

\paragraph{The trajectory of the current vertex}
The algorithm is guided by the movement of the current vertex $\CV$ that explores the component hierarchy. Initially, it moves down from the root $r$ to the source node $s$, activating all vertices along the $r$--$s$ path. 
As long as the current bucket at $\CV$ contains a node, we add such nodes to $S$. Whenever a node $i$ is added to $S$, the subroutine \textsc{Update}$(i)$ scans over the outgoing arcs $(i,j)$, and updates the estimates $D(j)$ to $\min\{D(j),D(i)+c(i,j)\}$ as in Dijkstra's algorithm. This requires some additional updates in the data structure, i.e., moving $j$ to a different bucket if its parent $p(j)$ is active, or updating the $D(w)$ value of its highest inactive ancestor.

If the current bucket $B$ at $v=\CV$ contains some vertices but no nodes, then $\CV$ moves down to a child vertex, and also activates it in case it had not yet been active. If $B$ is empty and if $B$ is not the last bucket of $v$,  then we move the current bucket to the next one, i.e., increment $\CI(v)$ by 1, and increase $D(v)$ by $a(v)$. If the last bucket at $v$ becomes empty, then we make $v$ permanent. At this point, all nodes and vertices in $\desc(v)$ must have been already made permanent. The algorithm then replaces $\CV$ by $p(v)$ if $v\neq r$.  The algorithm terminates once the last bucket at the root $r$ becomes empty and $r$ is made permanent.

After incrementing $\CI(v)$ in the case that $B$ is empty, we proceed to the next bucket with no change in $\CV$ if $v = r$ or if the new $D(v)$ value is less than $D(p(v)) + a(p(v))$.   On the other hand, if $D(v)\ge D(p(v))+a(p(v))$, then the current vertex CV moves up to $p(v)$, and $v$ is moved from the current bucket at $p(v)$ to a higher bucket. Overall, this scheme allows  $D(\CV)$ to be approximately minimal among the labels of active vertices, and thereby enabling the properties asserted in Lemma~\ref{lem:active-parent}.
    
Finally, if the last bucket at $v$ becomes empty, then we make $v$ permanent; at this point, all nodes and vertices in $\desc(v)$ must have been already made permanent. The algorithm terminates once the last bucket at the root $r$ becomes empty and $r$ is made permanent.

\subsection{Description of the algorithm}\label{sec:shortest-describe}

A more formal description of the algorithm with pseudocodes is in order. Recall the basic notation regarding component hierarchies from  Section~\ref{sec:prelim}: $\parent(v)$ (parent of $v$); $\children(v)$ (children of $v$); $\desc(v)$ (descendant of $v$, refined as $\desc(v,N)$ for nodes and $\desc(v,V)$ for vertices); $\lca(u,v)$  (least common ancestor of $v$).

The set $S\subseteq N\cup V$ denotes the set of \emph{permanent} nodes and vertices, initialized as $S=\emptyset$; the first node entering will be the source $s$. Shortest paths will be maintained using predecessor arcs: for each $i\in N\setminus \{s\}$ with $D(i)<\infty$, $\pred(i)\in S$ is an in-neighbour such that $D(i)=D(\pred(i))+c(\pred(i),i)$. The graph of the arcs $(\pred(i),i)$ is acylic, and contains a path from the source $s$ to every node $i\in N$ with $D(i)<\infty$.

The description of the two main subroutines, \textsc{Activate} and \textsc{Update} follows. 

\paragraph{The \textsc{Activate} subroutine and buckets}
Each vertex $v\in V$ can be \emph{active} or \emph{inactive}.
 One of the active vertices will be $\CV$, the current vertex, initalized as $\CV = r$.

 The labels are defined for all nodes (initially as $D(s)=0$ and $D(i)=\infty$ for $i\in N\setminus\{s\}$), for all active vertices, and for all highest inactive vertices. For the latter set, we maintain $D(v) = \min\{D(i) :\, i \in\desc(v)\}$ using the Split/FindMin data structure, as
  detailed in Section~\ref{sec:split-findmin}. For all other inactive vertices, the labels $D(v)$ are undefined.

The \textsc{Activate}$(v)$ subroutine (Algorithm~\ref{alg:activate}) is called the first time $\CV$ is set to $v$. We create an array of $\eta(v)+1$ empty buckets, indexed $k=0,\ldots,\eta(v)$, and denoted as $\bck(v,k)$. The buckets correspond to intervals $[\Lower(v,k),\Upper(v,k))$ of length $a(v)$. The 0th bucket starts at $\LB(v)$, which equals $D(v)$ rounded down to the nearest integer multiple of $a(v)$ (recall this is an integer power of 2).

For $x\in V\cup N$, the \textsc{MoveToBucket}$(x)$  procedure (Algorithm~\ref{alg:movetobucket}) checks if $D(x)$ falls in the value range of a bucket at the parent  $v = p(j)$, places it in such a bucket, and if it was previously in a bucket, deletes it from there.
 
\begin{algorithm}[htb]
    \caption{The \textsc{Activate}  subroutine}\label{alg:activate}
    \begin{algorithmic}[1]
        \Procedure{Activate}{$v$}
         \State $\LB(v)\gets  a(v)\left\lfloor \frac{D(v)}{a(v)}\right\rfloor$\ ;
         \State $D(v)\gets \LB(v)$; $\CI(v)\gets 0$ ;
\For{$k=0,\ldots,\eta(v)$}
 \State $\bck(v,k)\gets \emptyset$ ;
 \State $\Lower(v,k)\gets \LB(v)+k a(v)$ ;
 \State $\Upper(v,k)\gets \LB(v)+(k+1)a(v)$ ;
 \EndFor
 \State $\UB(v)\gets L(v)+(\eta(v)+1)a(v)$ ;
 \For{$w\in \children(v)\cap V$}
                    \State $D(w)\gets\min\{D(i): i\in \desc(v)\}$ ;\Comment{using the Split/FindMin data structure}
                    \State \Call{MoveToBucket}{$w$} ;
                    \EndFor
                      \For{$j\in \children(v)\cap N$}
                    \State \Call{MoveToBucket}{$j$} ;
                    \EndFor
          \EndProcedure
                  \end{algorithmic}

\end{algorithm}

\begin{algorithm}[htb]
    \caption{The  \textsc{MoveToBucket} subroutine}\label{alg:movetobucket}
    \begin{algorithmic}[1]
        \Procedure{MoveToBucket}{$x$}
        \State $v\gets p(x)$ ;
        \If{$v$ is active and $D(x)<\UB(v)$}
         \State $k\gets \left\lfloor\frac{D(x)-\LB(v)}{a(v)}\right\rfloor$ ;
         \If{$x\notin \bck(v,k)$}
         \State delete $x$ from its current  bucket (if any) ;
         \State add $x$ to $\bck(v,k)$ ;
         \EndIf
        \EndIf
              \EndProcedure
        \end{algorithmic}

\end{algorithm}

\paragraph{The \textsc{Update} subroutine}
The \textsc{Update} subroutine (Algorithm~\ref{alg:update}) performs the  label update step once a node $i$ is made permanent, similarly to Dijkstra's algorithm. For every outgoing arc $(i,j)$, if $D(i)+c(i,j)$ is strictly less than the current label $D(j)$, we reduce $D(j)$ to this value, and set the predecessor $\pred(j)$ to $i$. If the parent $p(j)$ is active, we call \textsc{MoveToBucket}{$(j)$} to update the bucket containing $j$. Otherwise, we update $D(w)$ for $w=\GIA(j)$, i.e., the highest inactive ancestor of $j$, using Split/FindMin.

\begin{algorithm}[htb!]
    \caption{The \textsc{Update} subroutine}\label{alg:update}
        \begin{algorithmic}[1]
                 \Procedure{Update}{$i$}
                    \For {$(i,j)\in A(i)$}
                    \If {$D(i)+c(i,j)<D(j)$}
                    \State $D(j)\gets D(i)+c(i,j)$ ; $\pred(j)\gets i$ ;
                    \If {$p(j)$ is active} \Call{MoveToBucket}{$j$} ;
                    \Else \ \
                    $w\gets \GIA(j)$ ; $D(w)\gets \min\{D(j), D(w)\}$ ; \\ \Comment{using the Split/FindMin data structure}
                    \EndIf
                    \EndIf
                    \EndFor

          \EndProcedure
        \end{algorithmic}
\end{algorithm}

\paragraph{The overall algorithm}
The overall algorithm is shown in Algorithm~\ref{alg:shortest}. 
Initially, the current vertex is set as the root: $\CV=r$. At any given iteration, we let $v=\CV$ and let $B$ denote the current bucket at $v$, i.e., $B=\bck(w,\CI(v))$.

If $B$ contains a node $i\in N$, we make it permanent, i.e., add it to $S$, and call   \textsc{Update}$(i)$ to update the labels for each out-neighbour $j$ of $i$. If $B$ contains no nodes but some vertices, we move $\CV$ to such a vertex $w$, and activate it if necessary.

The remaining possibility  is when  the bucket $B$ becomes empty in the current iteration.  We increment the counter $\CI(v)$ by 1 and accordingly update $D(v)$ to $D(v)+a(v)$, the starting point of the new current bucket. In case $\CI(v)=\eta(v)+1$, i.e., if $B$ was already the final bucket, then we make $v$ permanent, and unless $v=r$, we move $\CV$ up to the parent $p(v)$. If $v=r$ then the algorithm terminates.

Otherwise, if $\CI(v)\le\eta(v)$, we check if the updated value $D(v)\ge D(p(v))+a(p(v))$, i.e., if the update requires moving $v$ to a higher bucket at $p(v)$ (assuming $v\neq r$). If this is the case, $\CV$  moves up to $p(v)$; otherwise, we proceed with $\CV=v$.

\begin{algorithm}
    \caption{\textsc{Shortest-Paths}}\label{alg:shortest}
    \begin{algorithmic}[1]
        \Require{A directed graph $G=(N,A,c)$  with  $c\in\Z_{\po}^A$, source node $s\in N$, a component hierarchy $(V\cup N,E,a)$ for $G$.}
                \Ensure{Shortest path labels for each $i\in N$ from $s$.}
                \State $S\gets \emptyset$ ;
                \State $D(s)\gets 0$ ; $D(r)\gets 0$ ;
                \For{$j\in N\setminus \{s\}$} $D(j)\gets\infty$ ;\EndFor 
                \For{$v\in V$} compute $\U(v)$ and $\eta(v)$ as in \eqref{eq:U-def} ;\EndFor
                \State $\CV\gets r$ ; \Call{Activate}{$r$} ;
                \While{$r\notin S$}
             \State $v\gets \CV$ ; $B\gets \bck(v,\CI(v))$ ;
             \If {$B\cap N\neq \emptyset$} 
              \State select a node $i\in B\cap N$ and delete $i$ from $B$ ;
              \State $S\gets S\cup \{i\}$ ;
              \State \Call{Update}{$i$} ;
              \ElsIf{$B\cap V\neq\emptyset$}
              \State select a vertex $w\in B\cap V$ ;
              \State $\CV\gets w$ ;
              \If{$w$ is inactive} \Call{Activate}{$w$} ;
              \EndIf
              \Else \Comment{$B=\emptyset$}
                   \State $\CI(v)\gets \CI(v)+1$ ; $D(v)\gets D(v)+a(v)$ ;
                \If{$\CI(v)=\eta(v)+1$}
                    \State $S\gets S\cup\{v\}$ ;
                    \If{$v\neq r$} $\CV\gets p(v)$ ;\EndIf 
                \ElsIf{$v\neq r$ and $D(v)\ge D(p(v))+a(p(v))$}
                    \State $\CV\gets p(v)$ ;
                    \State \Call{MoveToBucket}{$v$} ;
            \EndIf
             \EndIf

\EndWhile
    \State \Return{labels $D(i)$: $i\in N$.} 
        \end{algorithmic}
\end{algorithm}

\subsection{Analysis}\label{sec:shortest-analysis}

\begin{theorem}\label{thm:shortest-main} 
 Algorithm~\ref{alg:shortest} computes shortest paths from node $s\in N$ to all other nodes in  $O(m)$.
\end{theorem}

We prove the theorem in two parts. Lemma~\ref{lem:running} shows the running time bound $O(m)$.
 Correctness follows using Lemma~\ref{lem:shortest-correctness} and Lemma~\ref{lem:active-parent} stated above. To prove the latter lemma, we need one more auxiliary statement (Lemma~\ref{lem:v-w}) that relates the label of an active vertex to that of its active descendants.

\begin{lemma}\label{lem:running}
The total running time of Algorithm~\ref{alg:shortest} is bounded as  $O(m)$.
\end{lemma}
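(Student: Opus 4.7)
The plan is to bound the total running time by charging work to three sources: the $O(n)$ invocations of \textsc{Main}, the aggregate work of its subroutines along all nodes/arcs/vertices, and the operations performed on the Split/FindMin data structure. Lemma~\ref{lem:main-bound} already gives that \textsc{Main} is called $O(n)$ times, and each invocation performs only $O(1)$ work aside from the subroutines and bucket primitives it triggers; this contributes $O(n)$.

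Next, I would bound the aggregate work of the subroutines. Each node $i\in N$ is placed in $S$ at most once, so \textsc{Update}$(i)$ runs at most once per node; its cost is $O(|A(i)|)$, summing to $O(m)$. Similarly, each vertex $v\in V$ is activated at most once: the bucket-creation loop performs $O(\eta(v))$ work, and by Lemma~\ref{lem:U-bound} we have $\sum_{v\in V}\eta(v)=O(n)$; the subsequent children-loop takes $O(|\children(v)|)$, which sums to $O(|V\cup N|)=O(n)$ because parent pointers form a tree. The bucket primitives \textsc{MoveToBucket} and \textsc{DeleteFromBucket} are invoked $O(m)$ times in total---once per arc relaxation in \textsc{Update} and once per child processed in \textsc{Activate}---and each runs in $O(1)$ because $a(v)$ is a power of $2$, so the index $\lfloor D(j)/a(v)\rfloor$ reduces to a bit shift.

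Finally, I would account for Split/FindMin: the algorithm performs $O(n)$ split operations (one at each activation, when a parent's descendants are partitioned among its children's subtrees) and $O(m)$ find\-min/decrease-key operations (one per successful relaxation in \textsc{Update} and one per child of a newly activated vertex). Invoking the guarantees of the Split/FindMin structure from Section~\ref{sec:split-findmin}, the cumulative cost of these operations is $O(m)$ in the word RAM model. Summing all contributions and using $n=O(m)$ by strong connectivity yields $O(m)$ overall.

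The main obstacle I anticipate is the Split/FindMin accounting: one must verify both that the splits occur in a nesting order compatible with the data structure's amortized analysis (which follows because activations descend the component hierarchy before backtracking), and that the decrease-key updates used to maintain $D(w)$ for inactive ancestors $w=\GIA(j)$ are each charged to a single arc scan in \textsc{Update}. Once these charging invariants are in place, the rest of the argument is a straightforward sum.
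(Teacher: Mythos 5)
Your proposal is correct and takes essentially the same route as the paper: charge $O(1)$ per iteration of \textsc{Main} outside the subroutines, charge \textsc{Update}$(i)$ to the arcs $A(i)$, charge \textsc{Activate}$(v)$ to the $\eta(v)$ buckets (bounded by Lemma~\ref{lem:U-bound}) and to $|\children(v)|$, and invoke the $O(m)$ word-RAM bound for Split/FindMin on the $O(n)$ splits and $O(m)$ decrease-key/findmin operations. The paper's argument is somewhat more compressed but performs the identical accounting.
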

\begin{proof}
The time for initialization is $O(n)$. Let us show that the main \emph{while} cycle is called $O(n)$ times. We consider the cases for $v=\CV$ and current bucket $B$ as {\em (i)} $B$ contains a node, or {\em (ii)} $B$ contains a vertex but no node, or {\em (iii)} $B$ is empty.  

Whenever case {\em (i)} occurs, a node is added to $S$, giving a bound of $O(n)$ for this case.   In case {\em (iii)}, $\CI(v)$ is incremented, and $\CV$ is possibly moved to $p(v)$.   The number of times this can occur is equal to the total number of buckets, which is $O(n)$ by Lemma~\ref{lem:U-bound}.

Let us now turn to case {\em (ii)}. Let $\tau$ denote the distance of the current vertex $\CV$ from the root $r$ in the component hierarchy. Both in the first and the final iteration, $\CV=r$, and thus $\tau=0$. Whenever case {\em (ii)} occurs, $\tau$ increases by one. The only way $\tau$ can decrease is if $\CV$ is moved from a vertex to its parent in case {\em (iii)}.
Thus, the total number of occurrences of case {\em (ii)}  is equal to the total number of  increases in $\tau$, which equals the total number of decreases, in turn bounded by $O(n)$. Thus, each of the three cases can only occur $O(n)$ times, bounding the number of iterations of the \emph{while} cycle.

\medskip

The subroutine \textsc{Update}$(i)$ is called once for each $i\in N$.  At each call,  the arcs in $A(i)$ are scanned.   The time to update $D(j)$ for $(i, j) \in A(i)$ is $O(1)$.   If $p(j)$ is active, then the time to put node $j$ in the correct bucket at $p(j)$ is $O(1)$.  A potential bottleneck occurs when $p(j)$ is inactive and $D(j)$ is updated.  In this case, the algorithm determines $w = \GIA(j)$ and then updates $D(w)$.  
The amortized time to 
 determine $w$ and update $D(w)$ is $O(1)$ using Thorup’s \cite{Thorup1999}  implementation of the Split/FindMin data structure (see Section~\ref{sec:split-findmin}). Thus, the total time of the updates is $O(m)$.

We now consider \textsc{Activate}$(v)$, which is called $O(n)$ times. The total number of buckets is $O(n)$, and 
each $x\in\children(v)$ has to be placed in a bucket; note that $\sum_{v\in V}|\children(v)|\le 2n-1$. The overall time for creating buckets and placing the children in buckets takes $O(n)$. Further, we need to update 
 $D(w)$ for $w \in \children(v)\cap V$.  For each $w$, this is again accomplished using the Split/FindMin data structure in amortized time $O(1)$.

  The total running time of the Split/FindMin operations can be bounded as $O(m)$.
The $O(n)$ bound on the \emph{while} iterations, the total  $O(n)$ on  \textsc{Activate} and $O(m)$ on \textsc{Update} yields the overall $O(m)$ bound.
\end{proof}

The next lemma will be key in proving Lemma~\ref{lem:active-parent}.
\begin{lemma}\label{lem:v-w}  Let $v$ and $w$ be active vertices such that $w\in\desc(v,V)$. Then
\begin{enumerate}[(i)]
    \item \label{v-w:i}
$D(w)\le D(v) + a(v)$; and
\item \label{v-w:ii} if 
 $\CV\in\desc(w,V)$, then  $D(w)+a(w) \le D(v) + a(v)$.
\end{enumerate}
\end{lemma}
\begin{proof}
We start by showing part \eqref{v-w:ii}.   
We prove it for the case $v = p(w)$; this immediately implies the general case.
 We first consider the case that $w$ has just become the current vertex and $v$ was previously the current vertex.   Since $w$ was selected from the current bucket of $v$, it follows that $D(w) < D(v) + a(v)$.   Moreover, $D(w)$, $D(v)$ and $a(v)$ are all integer multiples of $a(w)$.   (In the case that $w$ was just activated, its label $D(w)$ was obtained by rounding its previous label down to the nearest multiple of $a(w)$.) The claim that $D(w) + a(w) \le D(v) + a(v)$ follows.

If $w$ is the current vertex, then $D(w)$ may only change if the current bucket at $w$ is empty, in which case  $D(w)$ is incremented to $D'(w)=D(w)+a(w)$. If $D'(w)\ge D(v)+a(v)$, then the current vertex moves up to $v$, at which point $\CV\notin\desc(w,V)$. Otherwise, $D'(w)< D(v)+a(v)$, implying $D'(w)+a(w)\le D(w)+a(w)$ as above.

In all other iterations when $\CV\in\desc(w,V)$, neither $D(v)$ nor $D(w)$ may change, and therefore the statement remains valid. This completes the proof of part \eqref{v-w:ii}.

Let us now show part \eqref{v-w:i}; we do not assume $v=p(w)$ for this proof. In light of part \eqref{v-w:ii}, we can focus on iterations when $\CV\notin \desc(w,V)$. When $w$ is activated, $w=\CV$. Consider any iteration when $\CV$ leaves $\desc(w,V)$; this means that $\CV$ moves from $w$ to $p(w)$. This happens when the current bucket at $w$ is empty and $D(w)\ge D(p(w))+a(p(w))$; but again using divisibility this means $D(w)= D(p(w))+a(p(w))$. By part \eqref{v-w:ii} applied to $p(w)$ and $v$, it follows that $D(w)= D(p(w))+a(p(w))\le D(v)+a(v)$. In all subsequent iterations until $w$ becomes the current vertex again, $D(w)$ remains unchanged, and $D(v)$ may only increase. Thus, $D(w)\le D(v)+a(v)$ is maintained, implying \eqref{v-w:i}.
\end{proof}
We are ready to show Lemma~\ref{lem:active-parent}, restated here.
\activep*
\begin{proof}
Let us start with the second statement. When $j$ is added to $S$, then $w=p(j)$ must be the current vertex, and $j$ is in the current bucket at $v$, that is, $D(w)\le D(j)< D(w)+a(w)$. According to Lemma~\ref{lem:v-w}\eqref{v-w:ii}, we have $D(w)+a(w)\le D(v)+a(v)$. Thus, the second statement holds.

We now prove the first statement by induction on the number of iterations. The statement clearly holds at initialization: $r$ is the only active vertex.  $D(r)=0$, $D(s)=0$, and $D(i)=\infty$ for $i\in N\setminus\{s\}$. Assume $D(v)\le D(j)$ holds at the beginning of the current iteration for every pair $j$ and $v$ such that $j\in N$, $v\in V$ is active, and $j\in\desc(v,N)$. The label of an active vertex may only increase, and the label of a node may only decrease in the algorithm; we analyze the two cases separately.

Consider a pair of $v$ and $j$ such that $D(v)$ increases. $D(v)$ may only change when $\CV=v$, and the current bucket at $v$ is empty; the new value is set to $D'(v)=D(v)+a(v)$. We claim that  $D'(v)\le D(j)$ holds. If $j\in\children(v)$, then this is true because the current bucket was empty.  Otherwise, let $w\in\children(v)\cap V$ be the vertex following $v$ on the $v$--$j$ path in the component hierarchy. Since the first bucket is empty, we must have $D(v)+a(v)\le D(w)$. By induction, we have $D(w)\le D(j)$; thus, $D'(v)\le D(j)$ must still hold.

Consider now a pair $v$ and $j$ such that $D(j)$ decreases. This can happen in a call to \textsc{Update}$(i)$ such that $(i,j)\in E$ and $D'(j)=D(i)+c(i,j)<D(j)$. We need to show $D'(j)\ge D(v)$.

Let $z=\lca(i,j)$; by the property of the component hierarchy, we have $c(i,j)\ge a(z)$. By induction, $D(z)\le D(i)$, and thus $D(z)+a(z)\le D(i)+ c(i,j)$. 
Since $z$ and $v$ are both on the path from $j$ to $r$, either $z \in \desc(v, V)$ or $v \in \desc(z, V)$.

If $v\in \desc(z,V)$, then $D(v)\le D(z)+a(z)\le D(i)+c(i,j)=D'(j)$ using  Lemma~\ref{lem:v-w}\eqref{v-w:i}.
If $z\in\desc(v,V)$, then also $i\in\desc(v,V)$, and thus $D(v)\le D(i)<D(i)+c(i,j)=D'(j)$ by induction. This completes the proof of the first statement.
\end{proof}

\begin{lemma}\label{lem:desc-upper-bound}
For every vertex $v\in V$ and descendant $i\in\desc(v)$, $d(i)< \UB(v)$.
\end{lemma}
\begin{proof}
Let $D(.)$ denote the labels immediately prior to the activation of vertex $v$, and let $D'(.)$ be the labels immediately after activation.  Then 
$D'(v) \le D(v) < D'(v) + a(v)$. 
Let $i := \arg\min\{D(j)\, :\, j \in \desc(v, N)\}$.   Then $d(i) \le D(i) = D(v)$.   By Lemma~\ref{lem:U-bound}, for all $j \in \desc(v, N)$, 
\[
\begin{aligned}
d(j) &\le d(i) + \Gamma(v)  \le D(v) + \Gamma(v) 
\le D(v) + \eta(v) a(v) \\
&\le D'(v) + (\eta(v) + 1)a(v) - 1  
= U(v) - 1\, .\end{aligned}\]  
\end{proof}

We are ready to prove Theorem~\ref{thm:shortest-main}.

\begin{proof}[Proof of Theorem~\ref{thm:shortest-main}]
Lemma~\ref{lem:running} provides the running time analysis. It remains to show that $D(i)=d(i)$ for every $i\in S$, and that the algorithm terminates with $N\subseteq S$.

We can use Lemma~\ref{lem:shortest-correctness} to show that the algorithm correctly sets the labels inside $S$. For this, we need to verify the following two properties:
\begin{enumerate}[(a)]
\item For all $j\in S$ and $i\in N\setminus S$, $D(j)\le D(i)+b(i,j)$.
\item For all $j\in N\setminus S$, $D(j)$ is the length shortest path from $s$ to $j$ inside the node set $S\cup \{j\}$.
\end{enumerate}
The proof of  \eqref{shortest:ii} follows the same argument as for Dijkstra's algorithm, see e.g. \cite[Section 4.5]{AMO}.
Part \eqref{shortest:i}  clearly holds in the first step when $S=\{r\}$. At the iteration when a node $j$ is added to $S$, consider any $i\in N\setminus S$, $i\neq j$, and let $v=\lca(i,j)$. Then, Lemma~\ref{lem:v-w} shows $D(j)-a(v)\le D(v)\le D(i)$. The claim follows since $b(i,j)\ge a(v)$ is a property of the component hierarchy.

It remains to show that $N\subseteq S$ at termination, i.e., at the iteration that sets $D(r)=\UB(r)$ and makes $r$ permanent. For a contradiction, let $j\in N\setminus S$ at the this iteration.
 Let $P = i_1, i_2, \ldots, i_k$ (where $i_1 = s$ and 
$i_k = j$)  be a shortest path from node $s$ to node $j$. Clearly, $k\ge 2$, and without loss of generality, let us assume that each node $i_t$, $t\le k-1$ was added to $S$ during the algorithm (or else we can replace $j$ by the first node $i_t$ of $P$ not added to $S$).

In the iteration when  $h=i_{k-1}$ was added to $S$, we had $D(h)=d(h)$ as shown above. Further, \textsc{Update}$(i)$ updated $D(j)$ to $D(h)+c_{hj}=c(P)=d(h)$. Clearly, $D(h)=d(h)$ for the rest of the algorithm. According to Lemma~\ref{lem:active-parent}, the final iteration has 
\[
\UB(r)=D(r)\le D(h)=d(h)<\UB(r)\, ,
\]
where the first equality follows by the termination condition, and the last inequality by  Lemma~\ref{lem:desc-upper-bound}. This completes the proof.
\end{proof}

\subsection{The Split/FindMin data structure}\label{sec:split-findmin}

For each highest inactive vertex $v$, the algorithm needs to be able to compute $D(v) = \min\{D(j) :\, j \in \desc(v, N)\}$.   To accomplish this, we will use the Split/FindMin data structure.   Before reviewing this data structure, we note that in addition to computing $D(v)$ for highest inactive vertices, the data structure will need to be updated whenever either of the following algorithmic operations takes place:
\begin{itemize}
\item  When a highest inactive vertex is activated, the subset $\children(v) \cap V$ all become highest inactive vertices.
\item  In step \textsc{Update}$(i)$, if $D(j)$ is updated, then $D(v)$ should be updated for $v = \GIA(j)$. 
\end{itemize}

The steps can be implemented using the Split/FindMin data structure. This was first introduced by Gabow \cite{Gabow85} for the maximum weight matching problem, and can be stated as follows (see also \cite{Pettie2014}).  The data structure is initialized with a sequence $E = \{e_1, \ldots, e_n\}$ of $n$ weighted elements.  
At each iteration, there is a set $\mathcal{S}$, which is a partition of $E$ into consecutive subsequences.  For every element $e_i$, we maintain a key value $\kappa(e_i)$.   At a given operation described below, we  
let $\mathcal{S}(e_i)$ denote the unique subsequence $\mathcal{S}$ that contains $e_i$.  Note that $\mathcal{S}$ and 
$\mathcal{S}(e_i)$ are modified whenever a split operation is called.

The operations are as follows:
\begin{itemize}
 \item init$(e_1,e_2,\ldots,e_n)$: Create a sequence set $\mathcal{S}\gets\{(e_1,e_2,\ldots,e_n)\}$ with $\kappa(e_i)=\infty$ for all $i\in [n]$.
 \item split$(e_i)$: For $\mathcal{S}(e_i)=(e_j,\ldots,e_{i-1},e_i,\ldots, e_k)$,  let $\mathcal{S}\gets (\mathcal{S}\setminus {\mathcal{S}}(e_i))\cup\{(e_j,\ldots,e_{i-1}),(e_i,\ldots, e_k)\}$.
 \item findmin$(e_i)$: Return $\min\{\kappa(e_j): e_j\in \mathcal{S}(e_i)\}$.
 \item descreasekey$(e_i,w)$: Set $\kappa(e_i)\gets \min\{\kappa(e_i),w\}$.
\end{itemize}

To use this data structure for our setting, we take the component hierarchy 
$(V\cup N,E,r,a)$, and impose an arbitrary ordering 
on the children of every vertex $v\in V$. 
This induces a total ordering on the set of leaves $N$; 
we index the node set $N=\{e_1,e_2,\ldots, e_n \}$ accordingly. 
Then, all sets $\desc(v)$ will correspond to contiguous subsequences of nodes. 
Initially, there are no active vertices and $E$ is the set of nodes.  Then $r$ is activated.  In general, when a vertex $v$ is activated, it corresponds to
 performing $|\children(v)| - 1$ splits on the nodes in $\desc(v)$, resulting in a consecutive subsequence for each child of $v$. 
(The nodes in $\children(v)$ correspond to subsequences of length 1.)  
Whenever $D(j)$ is updated, this corresponds to a decreasekey operation.  Using the Split/FindMin data structure for the shortest path algorithm requires at most $n$ findmin operations, at most $n-1$ splits, and at most $m$ decreasekey operations.

For $O(n)$ split and $O(m)$ decreasekey operations with $m\ge n$, Gabow \cite{Gabow85} gave an implementation in $O(m\alpha(m,n))$ total time in the comparison-addition model. This was improved by Thorup to $O(m)$ in the word RAM model, using the atomic heaps data structure by Fredman and Willard \cite{Fredman1994}. The original implementation of fusion trees permits all bitwise operations as well as multiplication.   In a subsequent paper \cite{ThorupAC}, Thorup showed how to implement fusion trees on a mild extension of the $\mathrm{AC}^0$ model, thus avoiding the need for multiplication except for multiplication by powers of 2.

We note that the data Split/FindMin structure was also used in all subsequent papers on shortest path problems using the hierarchy approach \cite{Hagerup2000,Pettie2002,Pettie2004,PR2005}. In the comparison-addition model, an improved bound $O(m\log\alpha(n,m))$ was given by Pettie \cite{Pettie2014}.


\section{Conclusions}\label{sec:conclusions}
In this paper, we have given an $O(mn)$ algorithm for the directed all pairs shortest paths problem with nonnegative integer weights. Our algorithm first replaces the cost function by a reduced cost satisfying an approximate balancing property in $O(m\sqrt{n}\log n)$ time. Subsequently, every shortest path computation can be done in linear time, by adapting Thorup's algorithm \cite{Thorup1999}.

One might wonder if our technique may also lead to an improvement for APSP in the comparison-addition model, where the best running time is $O(mn+n^2\log\log n)$ by Pettie \cite{Pettie2004}. This running time bound is based on multiple bottlenecks. However, as explained in Section~\ref{sec:shortest-overview}, the approximate cost balancing is able to get around the sorting bottleneck of \cite{Pettie2004}. Using the $O(m\log\alpha(n,m))$ implementation of Split/FindMin, an overall $O(mn\log\alpha(n,m))$ might be achievable. 

However, there is one remaining important bottleneck where our algorithm crucially relies on bit-shift operations: the operation \textsc{MoveToBucket}$(j)$, which places a node/vertex in the bucket at $v=p(j)$ containing the value $D(j)$.
Pettie and Ramachandran \cite{PR2005} show that these operations can be efficiently carried out in $O(1)$ amortized time per operation in a bucket-heap data structure, assuming the hierarchy satisfies certain \emph{`balancedness'} property. Section 5 of the paper shows how the `coarse hierarchy' obtainable from a minimum spanning tree and used by Thorup can be transformed to a `balanced hierarchy'. This method does not seem to easily apply to the directed hierarchy concept used in this paper.

Our approximate min-balancing algorithm may be of interest on its own, and has strong connections to the matrix balancing literature as detailed in Section~\ref{sec:balance-overview}. For finding an $(1+\varepsilon)$-min-balanced reduced cost for 
$\varepsilon=O(1)$, our algorithm takes $O\left(\frac1\varepsilon {m\sqrt{n}\log n}\right)$ time. One might wonder if there is an algorithm with the same polynomial term $\tilde O(m\sqrt{n})$ but with a dependence on $\log(1/\varepsilon)$. We note that the algorithm in \cite{Schulman2017} for approximate max-balancing has a $\log(1/\varepsilon)$ dependence.

\paragraph{Acknowledgement} The authors are very grateful to an anonymous referee. Their insightful comments lead to simplifications in some arguments and significant improvements in the presentation.

\bibliographystyle{abbrv}
\bibliography{references}

\end{document}